\newtheorem{remark}{Remark}
\newtheorem{problem}{Problem}
\newtheorem{proposition}{Proposition}
\newtheorem{claim}{Claim}
\newtheorem{definition}{Definition}
\newtheorem{lemma}{Lemma}
\newtheorem{theorem}{Theorem}
\newcommand{\sdeg}{{\sf syntdeg}}
\newcommand{\rej}{\#{\sf rej}}
\newcommand{\gap}{{\sf gap}}
\newcommand{\acc}{\#{\sf acc}}
\newcommand{\W}{{\sf W}}
\newcommand{\fpt}{{\sf FPT}}
\newcommand{\rfpt} {\W[{\sf P}]{\mbox{-}{\sf RFPT}}}
\newcommand{\pperm}{{\sf p}\mbox{-}{\sf perm}}
\newcommand{\pdet}{{\sf p}\mbox{-}{\sf det}}
\newcommand{\pacit}{{\sf p}\mbox{-}{\sf acit}}
\newcommand{\w}[1]{{\sf W}[#1]}
\newcommand{\Gap}{{\sf Gap}\text{-}}
\newcommand{\diff}{{\sf Diff}\text{-}}
\newcommand{\Wp}{{\sf W[P]}}
\newcommand{\ShP}{\#{\sf P}}
\newcommand{\ppp}{{\sf W[P]}\mbox{-}{\sf PFPT}}
\newcommand{\size}{{\sf size}}
\begin{document}
\title{Parameterized Analogues of Probabilistic Computation\thanks{   Department of Computer Science and Engineering
  Indian Institute of Technology Madras, Chennai, India. 
  \tt{\{ankitch,bvrr\}@cse.iitm.ac.in}}}
\author{Ankit Chauhan
        \and 
        B. V. Raghavendra Rao}  

\maketitle

\begin{abstract}
We study structural aspects of randomized  parameterized computation. We  introduce a new class $\ppp$ as a natural parameterized analogue of ${\sf PP}$. Our definition uses the  machine based characterization of the parameterized complexity class $\Wp$ obtained by Chen et.al [TCS 2005].   We translate most of the structural properties and characterizations of the class ${\sf PP}$ to the new class  $\ppp$.  

We study a parameterization of the polynomial identity testing problem based on the degree of the  polynomial computed by the arithmetic circuit.   We obtain a parameterized analogue of the well known Schwartz-Zippel lemma [Schwartz, JACM 80 and Zippel, EUROSAM 79].  
 
Additionally, we introduce a parameterized variant of  permanent, and  prove its $\#W[1]$ completeness.
\end{abstract}
\section{Introduction}
{\em Parameterized Complexity Theory} provides a formal framework for finer complexity analysis of problems by allowing a parameter along with the input. It was pioneered by Downey and Fellows~\cite{DF,DF92} two decades ago. Since then, it has revolutionized algorithmic research~\cite{Nie06}, and  led to the development of several important algorithmic techniques.  

{\em Fixed Parameter Tractability} (FPT) forms the central notion of tractability in Parameterized Complexity Theory. Here, any problem  that is decidable in deterministic time $f(k){\sf poly}(n)$ is deemed to be tractable, where $k$ is the parameter  and $f$ any computable function. Several ${\sf NP}$ hard problems including the vertex cover problem are known to be tractable under this notion~\cite{FG}. 
 
 The $\W$-hierarchy serves as the basis for all intractable problems in the parameterized world. $\W[1]$, the smallest member of $\W$-hierarchy, consists of  problems that are FPT equivalent to the $p$-clique problem~\cite{FG}. The limit of  $\W$ hierarchy, $\Wp$ encapsulates all problems solvable in  non-deterministic $f(k){\sf poly}(n)$ time using at most $g(k)\log n$  non-deterministic bits~\cite{CFG05,FG}, where $f$ and $g$ are arbitrary computable functions.

There have been significant efforts towards  understanding the structure of parameterized complexity classes in the last two decades.  Specifically exact characterizations of the $\W$ hierarchy and other related hierarchies are known~\cite{DFR98}. (See also \cite{FG,DF}.)
   
   Apart from non-deterministic computation, probabilistic computation serves as  one of the crucial  building blocks of Complexity Theory. Probabilistic complexity classes have been well studied in the literature and has been an active area of research for more than three decades.  There are a  significant number of parameterized algorithms  that use randomization~\cite[Chapter 8]{DF}.   Hence, development of randomized complexity classes in the parameterized framework is necessary to understand the use of randomization in the parameterized setting.
          
M\"uller~\cite{Mul08,Mul08b}  was the first to  do a systematic development and  study of parameterized randomization. He defined bounded error probabilistic parameterized classes such as ${\Wp}\mbox{-}{\sf BPFPT}$ and ${\W[1]}\mbox{-}{\sf BPFPT}$.  Further, he obtained  amplification results and  conditions for derandomization of these classes.    Further, M\"uller~\cite{Mul08}  studied several parameterizations of the well known polynomial identity testing problem (ACIT) and obtained several hardness results as well has  upper bounds in terms of the newly defined randomized classes. 

 We continue the line of research initiated by M\"uller~\cite{Mul08} and study a  parameterized variant of probabilistic computation with unbounded error and  establish a relationship with the corresponding parameterized counting class. 
 
  It should be noted that almost all of the randomized FPT algorithms use randomness of the same magnitude  as their running times. However, such an algorithm  cannot be visualized as a non-deterministic algorithm with $f(k)\log n$ random bits, where $f(k)$ is an arbitrary computable function. This is in stark contrast to the classical setting, where every randomized algorithm with bounded error probability can also be seen as a non-deterministic algorithm with the same time bound. So it is desirable to have randomized FPT algorithms that use at most $O(f(k)\log n)$ random bits instead of $f(k) {\sf poly}(n)$ random bits. As a first step towards this we obtain such an algorithm for a suitable parameterization of ACIT.          
 
Finally, following the recent developments in the parameterized complexity theory of counting problems~\cite{BC12,Cur13,CM14}, we develop a parameterized  variants of the problems of computing permanent and determinant of a matrix.

\paragraph*{Our results}
We make an attempt at understanding the relations between counting and probabilistic classes. We focus on a probabilistic analogue of the  class $\Wp$. Using the notion of $k$-restricted Turing machines~\cite{CFG05}, we introduce $\ppp$ as a parameterized variant of the probabilistic polynomial time ({\sf PP}). As in the classical complexity setting, we  establish a close connection between $\ppp$  and  the counting class~$\#\Wp$ (Theorem~\ref{thm:oracle-wp}). Further, we  show that $\ppp$ is closed under complementation and symmetric differences. (Theorem~\ref{thm:pp-comp} and Lemma~\ref{lem:pp-symdiff}.)

We consider the polynomial identity testing problems ({\sf ACIT}) with the {\em syntactic degree} (See Section~2 for a definition) as a parameter. Using the construction of hitting set generators by Shpilka and Volkovich~\cite{SV09}, we obtain what can be called as a  parameterized analogue of the celebrated Schwartz-Zippel Lemma~\cite{Sch80,Zip79}. (Theorem~\ref{thm:pacit}.)

Finally, we introduce a parameterized variant of the permanent function $\pperm$ and prove that it characterizes the class $\#\W[1]$. (Theorem~\ref{thm:pperm}.) Analogously, a variant of the determinant  function ($\pdet$) and show that it is Fixed Parameter Tractable (Theorem~\ref{thm:pdet}).
 
\section{Preliminaries}
 We include some of the definitions from Parameterized Complexity theory and  Complexity theory  here.
For Parameterized Complexity, the notations in~\cite{DF,FG} are followed. Definitions of complexity classes   can be found in e.g.,~\cite{DK,AB}. 
  

 A {\em parameterized}  language   is a set $P \subseteq \Sigma^* \times \mathbb{N} $, where $\Sigma$ is a finite alphabet. If $( x, k ) \in \Sigma^* \times \mathbb{N}$ is an input  instance of a parameterized language, then  $x$ is referred to  as the {\em input} and  $k$ as the {\em  parameter}.
 
  A {\em  parameterized counting} problem is a  pair $(f,k)$, where  $f: \Sigma^* \to \mathbb{N} $ is a counting function and $k$ is the parameter and $\Sigma$ is  a finite alphabet. For notational convenience, we will denote a parameterized counting problem as a function $f:\Sigma^*\times \mathbb{N}\to\mathbb{N}$, where the second argument to $f$ is considered as the parameter.

A parameterized language $P \subseteq \Sigma^* \times \mathbb{N} $ is said to be {\em  fixed-parameter tractable} if there is an algorithm that given a pair $(x, k) \in   \Sigma^* \times \mathbb{N}$ , decides if $(x, k) \in P$ in at most $O(f(k)|x|^c)$ steps, where  $f:\mathbb{N}\rightarrow \mathbb{N}$ is a computable function and  $c\in \mathbb{N}$ is a constant.
\begin{definition} \fpt\ denotes        the  class of all parameterized languages that are fixed parameterized tractable. 
\end{definition}

A parameterized language $L$ is said to be in {\sf RFPT} (Randomized \fpt) if there is a $f(k){{\sf poly}(n)}$ time bounded  randomized machine  accepting $L$ with bounded one-sided error probability. See~\cite{FG,DF} for more details.

\begin{definition}
A {\em $k$-restricted  machine} is a  non-deterministic   $g(k){\sf poly}(n)$ time bounded  Random Access Machine (RAM) that uses at most $f(k)$ non-deterministic words, where $f$ and $g$ are arbitrary computable functions. Here we assume that the word size is $O(\log n)$, where $n$ is the length of the input.

A {\em $k$-restricted Turing machine} is a non-deterministic   $g(k){\sf poly}(n)$ time  Turing machine that makes at most $f(k)\log n$ non-deterministic moves, where $f$ and $g$ are arbitrary computable functions. 
\end{definition}
\begin{definition}
A {\em tail} non-deterministic  machine is a $k$-restricted machine in which all non-deterministic steps are among last $f(k)$ steps.  
\end{definition}
$\W$[P] is the class of all parameterized problems ($Q,k$) that can be decided by a $k$-restricted non-deterministic  machine (for more details see chapter 3 in \cite{FG}).
$\W$[1] is the class of all parameterized problems ($Q,k$) that can be decided by {\em tail} non-deterministic  machine (for more details see \cite{CFG05}).

 
 For a non-deterministic machine $M$, let $\acc_M(x,k)$ and $\rej_M(x,k)$ respectively denote the number of accepting and rejecting paths of $M$ on input $(x,k)$. Define 
 $\gap_M(x,k){=}\acc_M(x,k)-\rej_M(x,k).$

\begin{definition}\cite{FG}
A parameterized counting function $(f,k)$ over the alphabet $\Sigma$ is in $\#\W$[P] if there is a $k$-restricted non-deterministic  machine
$M$ such that  $f(x,k)=\acc_M(x,k)$. 
\end{definition}
\begin{definition}
A probabilistic $k$-restricted   machine is a probabilistic  $g(k){\sf poly}(n)$ time bounded  RAM that make at most $f(k)$ probabilistic moves, where $f$ and $g$ are some computable functions. Here we assume that one probabilistic move involves choosing a random word of $O(\log n)$ bits.   
\end{definition}
A language $L$ is said to be in $\rfpt$~\cite{Mul08} if there is a $k$-restricted probabilistic  machine such that $(x,k)\in L\implies Pr[M \mbox{ accepts } (x,k)]\ge 2/3$ ; \mbox{ and } $x\notin L \implies Pr[M \mbox{ rejects } (x,k)]= 0$.

An {\em arithmetic circuit} $C$ is a directed acyclic graph with labelling on the vertices as follows. Nodes  of   in-degree zero are  called {\em input} gates and  are labelled from $\{-1,0,1\}\cup\{x_1,\ldots, x_n\}$ where $x_1,\ldots, x_n$ are the input variables. The remaining  gates  are  labelled  $\times$ or $+$. An arithmetic circuit has exactly one gate of zero out-degree called the {\em output} gate. Every gate $v$ in an arithmetic circuit can naturally be associated with a polynomial $p_v\in \mathbb{Z}[x_1,\ldots,x_n]$, where the polynomials associated at input nodes are either constants or variables. If $v=v_1+v_2$ then $p_v= p_{v_1}+p_{v_2}$ and if $v=v_1\times v_2$ then $p_v=p_{v_1}\times p_{v_2}$.  The polynomial computed by the circuit $C$ is the polynomial associated with its only output gate and is denoted by $p_C$.
The size of an arithmetic circuit is the number of gates in it and is denoted by $\size(C)$.       

We associate a number called the {\em syntactic degree}  ({\sf syntdeg})\footnote{Syntactic degree is also known as the formal degree~\cite{KSS14} and is a standard parameter for arithmetic circuits.} with every gate of an arithmetic circuit $C$. For a leaf node $v$, $\sdeg(v)=1$. If $v=v_1+v_2$ then $\sdeg(v)=\max\{\sdeg(v_1),\sdeg(v_2)\}$ and if $v=v_1\times v_2$ then
$\deg(v)=\sdeg(v_1)+\sdeg(v_2)$.  It should be noted that the degree of the polynomial computed by a circuit is bounded by its syntactic degree. 
\begin{remark}
 the parameter $d_{\times}$ introduced in~\cite{Mul08} is closely related to $\sdeg$, in fact $\sdeg\le 2^{d_{\times}}\le 2^{\sdeg}$. 
\end{remark}
In~\cite{Alo99}, Alon  obtained a characterization for multivariate  polynomials
that are not identically zero known as the Combinatorial Nullstellensatz: 
\begin{proposition}[Combinatorial Nullstellensatz, \cite{Alo99}]
\label{pro:combnull}
Let $P\in \mathbb{K}[x_1, \ldots, x_n]$ be a polynomial where for
every $i\in [n]$, the degree of $x_i$ is bounded by $t$. Let
$S\subseteq\mathbb{K}$ be a finite set of size at least $t+1$, and $A=S^n$. Then
$P\equiv 0 \iff P(a) = 0, ~\forall a\in A$.
\end{proposition}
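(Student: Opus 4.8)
The plan is to prove the nontrivial (reverse) implication by induction on the number of variables $n$; the forward implication is immediate, since the zero polynomial vanishes at every point of $A$.

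For the base case $n=1$ I would invoke the classical fact that a nonzero univariate polynomial of degree $d$ over a field (indeed over any integral domain) has at most $d$ roots. Here $P\in\mathbb{K}[x_1]$ has degree at most $t$, and if it vanishes at all of $S$ it has at least $|S|\ge t+1$ distinct roots in $\mathbb{K}$, which forces $P\equiv 0$.

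For the inductive step, assuming the statement for $n-1$ variables, I would write $P$ as a polynomial in $x_n$ with coefficients in $\mathbb{K}[x_1,\ldots,x_{n-1}]$,
\[
P=\sum_{j=0}^{t} c_j(x_1,\ldots,x_{n-1})\,x_n^{\,j},
\]
where the sum stops at $t$ because the degree of $x_n$ in $P$ is at most $t$. Fix an arbitrary tuple $a=(a_1,\ldots,a_{n-1})\in S^{n-1}$ and consider the univariate polynomial $y\mapsto P(a_1,\ldots,a_{n-1},y)=\sum_{j=0}^{t} c_j(a)\,y^{\,j}$; it has degree at most $t<|S|$ and vanishes on all of $S$, so by the base case it is identically zero, i.e.\ $c_j(a)=0$ for every $j$. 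Since $a$ ranges over all of $S^{n-1}$, each $c_j$ vanishes on $S^{n-1}$; and since every variable $x_i$ with $i\le n-1$ has degree at most $t$ in $c_j$ (this bound is inherited from $P$), the induction hypothesis applied to $c_j\in\mathbb{K}[x_1,\ldots,x_{n-1}]$ yields $c_j\equiv 0$ for all $j$, whence $P\equiv 0$.

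I do not expect a serious obstacle here; the one point to handle carefully is the degree bookkeeping in the inductive step — checking that each coefficient polynomial $c_j$ genuinely inherits the per-variable degree bound $t$ — together with making sure the univariate root-count argument is applied over a field (or integral domain) so that it is valid. Everything else is routine.
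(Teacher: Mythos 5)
Your proposal is correct: the paper itself gives no proof of this statement (it is imported verbatim from Alon~\cite{Alo99}), and your induction on the number of variables, reducing to the univariate root-count bound in the base case, is exactly the standard argument by which the cited lemma is proved. The only care needed is the one you already flag — $\mathbb{K}$ must be a field (or integral domain) for the univariate step, and the per-variable degree bound $t$ passes to the coefficient polynomials $c_j$ — so nothing is missing.
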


\section{Probabilistic Computation}
In this section, we develop a parameterized analogue of the  classical complexity class ${\sf PP}$.  Our definition of $\ppp$ is based on   $k$-restricted probabilistic Turing machines.  

Throughout this section unless otherwise stated, $f(k)$ denotes an arbitrary computable function, and $P(n,k)=f(k)\log n$.  For an input $x$, we denote $n=|x|$.

\begin{definition}
Let $L$ be a parameterized language. $L$ is said to be in the class $ \ppp$ if there is a $k$-restricted  probabilistic Turing machine $M$ such that for any $(x,k)\in \Sigma^*\times\mathbb{N}$ we have,
\begin{eqnarray*}
(x,k) \in L \Rightarrow \Pr[ M\mbox{ accepts }(x,k) ]> \frac{1}{2} 
\\
(x,k) \notin L \Rightarrow \Pr[  M \mbox{ accepts }(x,k) ]\leq \frac{1}{2}
\end{eqnarray*}
where the probabilities are over the random choices made by $M$.
\end{definition}
 Without loss of generality,  we assume  $\Sigma=\{0,1\}$.

In the classical setting, ${\sf PP}$  is known to   have several characterizations  based on, 1) difference between two $\#{\sf P}$ functions~\cite{For97}, 2) difference between the number of accepting  and rejecting paths of a polynomial time bounded non-deterministic  Turing machine~\cite{For97}, 3) logics based on majority quantifiers~\cite{Kon09} and 4)  large fan-in circuits   with threshold gates~\cite{AW90}. We observe that all of the characterizations except (3) hold for $\ppp$. However, it is not clear if the majority quantifier logical characterization of ${\sf PP}$~\cite{Kon09}  translates to the parameterized setting.
  
\begin{definition}[\diff\fpt, \Gap\fpt]
A  parameterized function $f: \Sigma^*\times \mathbb{N}\to {\mathbb{Z}} $ is said to be in $\diff\fpt$ if there are two  functions $g,h\in \#\Wp$ such that $f(x,k)= g(x,k)-h(x,k)$.

$f$ is said to be in $\Gap\fpt$ if there is a $k$-restricted TM $M$ such that $f(x,k)=\acc_M(x,k)-\rej_M(x,k)$, $\forall (x,k)\in \Sigma^*\times \mathbb{N}$.
\end{definition}
 Firstly, we observe that the two classes $\Gap\fpt$ and $\diff\fpt$ coincide.
\begin{lemma}
\label{lem:diff-gap}
$\Gap\fpt=\diff\fpt$
\end{lemma}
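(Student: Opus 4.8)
The plan is to transport the classical equality $\mathrm{GapP}=\#\mathrm{P}-\#\mathrm{P}$ (see \cite{For97}) to the parameterized world, the only extra care being that every machine we construct must stay $k$-restricted, i.e.\ use $O(f(k)\log n)$ non-deterministic moves and run in $g(k){\sf poly}(n)$ time.

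For the inclusion $\Gap\fpt\subseteq\diff\fpt$ I would start from $f\in\Gap\fpt$ witnessed by a $k$-restricted TM $M$ with $f(x,k)=\acc_M(x,k)-\rej_M(x,k)$. After the routine normalisation that makes every computation path of $M$ halt in either an accepting or a rejecting state, $g:=\acc_M$ lies in $\#\Wp$ directly from the definition. Letting $M'$ be the machine obtained from $M$ by interchanging accepting and rejecting states, we get $h:=\acc_{M'}=\rej_M\in\#\Wp$, and hence $f=g-h\in\diff\fpt$.

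For the converse $\diff\fpt\subseteq\Gap\fpt$, write $f=g-h$ with $g=\acc_{M_g}$ and $h=\acc_{M_h}$ for $k$-restricted machines $M_g,M_h$, again normalised so that every path halts in a definite state and, after padding, so that both make the same number $P(n,k)$ of non-deterministic moves on every path. I would then build a $k$-restricted TM $M$ that first makes one non-deterministic move picking $b\in\{0,1\}$: on $b=0$ it simulates $M_g$ and accepts whenever $M_g$ accepts, while on each rejecting path of $M_g$ it makes one extra non-deterministic move splitting into one accepting and one rejecting leaf; on $b=1$ it simulates $M_h$ and rejects whenever $M_h$ accepts, while on each rejecting path of $M_h$ it again splits into one accepting and one rejecting leaf. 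A direct path count then shows that the $b=0$ subtree contributes $\acc_{M_g}+\rej_{M_g}$ accepting and $\rej_{M_g}$ rejecting leaves, i.e.\ $g$ to $\gap_M$, and the $b=1$ subtree contributes $\rej_{M_h}$ accepting and $\acc_{M_h}+\rej_{M_h}$ rejecting leaves, i.e.\ $-h$ to $\gap_M$; so $\gap_M(x,k)=g(x,k)-h(x,k)=f(x,k)$. Since $M$ makes at most $P(n,k)+2$ non-deterministic moves and runs in $g'(k){\sf poly}(n)$ time, it is $k$-restricted, whence $f\in\Gap\fpt$.

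I do not expect a serious obstacle here: the construction is the textbook one. The two points that need (routine) attention are (i) that the balancing gadget attached to the rejecting paths adds only a constant number of non-deterministic moves, so $k$-restrictedness is preserved, and (ii) reconciling the RAM-based definition of $\#\Wp$ (at most $f(k)$ non-deterministic words) with the Turing-machine-based definition of $\Gap\fpt$ (at most $f(k)\log n$ non-deterministic bits), which is immediate because one non-deterministic word of $O(\log n)$ bits corresponds to $O(\log n)$ binary non-deterministic moves and this conversion leaves the number of accepting and rejecting paths unchanged.
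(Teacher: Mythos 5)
Your proof is correct, and your first inclusion ($\Gap\fpt\subseteq\diff\fpt$, via the machine obtained by swapping accepting and rejecting states of $M$) is exactly the paper's. For the converse you take a genuinely different construction: you branch on a bit $b$ at the root and simulate $M_g$ (respectively $M_h$ with answers flipped) in the two disjoint subtrees, attaching a one-move balancing gadget to every rejecting path, so that the $b=0$ subtree contributes exactly $g$ and the $b=1$ subtree exactly $-h$ to the gap, and the two contributions simply add. The paper instead composes sequentially: it runs $M_1$, and only on rejecting paths of $M_1$ does it simulate $M_2$ (with answers flipped), tossing a final coin on the doubly-rejecting paths. In that sequential composition the path counts of $M_2$ get multiplied by the number of rejecting paths of $M_1$, so the combined machine has gap $\acc_{M_1}(x,k)-\rej_{M_1}(x,k)\cdot\acc_{M_2}(x,k)$ as written, which equals $\acc_{M_1}-\acc_{M_2}$ only after normalising $M_1$ to have a single rejecting path; your root-branching version avoids this cross-term altogether, at the cost of one extra non-deterministic bit, and the identity $\gap_M=g-h$ verifies directly. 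A further small plus is that you make explicit the conversion between the RAM-based definition of $\#\Wp$ (at most $f(k)$ non-deterministic words) and the Turing-machine-based definition of $\Gap\fpt$ (at most $f(k)\log n$ non-deterministic bits), a point the paper leaves implicit.
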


 \begin{proof}
To show $\Gap\fpt\subseteq \diff\fpt$: Let $f\in\Gap\fpt$, then there is a $k$-restricted $M$ with $f(x,k)= \acc_M(x,k)-\rej_M(x,k)$.  Let $M'$ be a new machine that  simulates $M$ on input $(x,k)$ and accepts if and only if $M$ rejects $(x,k)$.
Then we have $f(x,k)=\acc_M(x,k)-\acc_{M'}(x,k)$.
For the converse inclusion, let $f\in \diff\fpt$, and $M_1, M_2$ be such that $f(x,k)= \acc_{M_1}(x,k)-\acc_{M_2}(x,k)$.
 Let $M$ be  a new machine: on input $(x,k)$,  $M$ runs $M_1$ on $(x,k)$, and accepts if $M_1$ does so. If $M_1$ rejects then $M$ simulates $M_2$ on $(x,k)$ and rejects if $M_2$ accepts. If $M_2$ rejects, then $M$ guesses a non-deterministic bit $b$, accepts if $b=1$ and rejects otherwise. Then $\acc_M(x,k)-\rej_M(x,k)=\acc_{M_1}(x,k)-\acc_{M_2}(x,k)=f(x,k)$. 
\end{proof}
\begin{lemma}
\label{lem:gap-closure}
$\Gap\fpt$ is closed under taking {\em $p$-bounded}  summations and products, i.e., if $g_1,\ldots, g_{t(k)}\in \Gap\fpt$,
then so are $g_1+g_2 \cdots + g_{t(k)}$ and $g_1\times g_2 \times \cdots \times g_{t(k)}$, where $t$ is any computable function.
\end{lemma}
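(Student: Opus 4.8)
The plan is to work directly with the $\Gap\fpt$ characterization. Given $g_1,\dots,g_{t(k)}\in\Gap\fpt$, fix $k$-restricted Turing machines $M_1,\dots,M_{t(k)}$ with $g_i(x,k)=\acc_{M_i}(x,k)-\rej_{M_i}(x,k)$; since each $M_i$ halts on every path within its time bound, each nondeterministic path of $M_i$ is unambiguously accepting or rejecting. I would build one $k$-restricted machine $M$ whose gap is the desired sum, and another whose gap is the desired product, and then check at the end that both stay $g(k){\sf poly}(n)$-time with $f(k)\log n$ nondeterministic moves --- which will be immediate because $t$ and all the resource bounds of the $M_i$ are computable functions of $k$ alone.

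For the summation I would first pad the list with copies of the zero function --- which lies in $\Gap\fpt$, via a machine that guesses one bit and accepts iff it is $1$ --- so that the length is a power of two $2^m$ with $2^m\le 2t(k)$. Then $M$ guesses $m=\lceil\log t(k)\rceil+1$ nondeterministic bits encoding an index $i$, simulates $M_i$, and accepts iff $M_i$ does. Each computation path of $M$ is a pair $(i,\pi)$ with $\pi$ a path of $M_i$, accepting iff $\pi$ is, so $\acc_M=\sum_i\acc_{M_i}$ and $\rej_M=\sum_i\rej_{M_i}$, hence $\gap_M=\sum_i g_i$.

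For the product I would have $M$ simulate $M_1,M_2,\dots,M_{t(k)}$ one after another, in each phase following a single nondeterministic path of $M_i$ and maintaining in one extra bit the running parity of the number of phases whose simulated path was rejecting; $M$ accepts iff this parity is even. A computation path of $M$ then corresponds to a tuple $(\pi_1,\dots,\pi_{t(k)})$ of paths, one per $M_i$; assigning $\mathrm{sign}(\pi_i)=+1$ if $\pi_i$ accepts and $-1$ if it rejects, the path is accepting in $M$ exactly when $\prod_i\mathrm{sign}(\pi_i)=+1$ and rejecting exactly when the product is $-1$. Writing $\acc_{M_i}-\rej_{M_i}=\sum_{\pi_i}\mathrm{sign}(\pi_i)$ and expanding the product,
\[
\gap_M(x,k)=\sum_{(\pi_1,\dots,\pi_{t(k)})}\prod_{i=1}^{t(k)}\mathrm{sign}(\pi_i)=\prod_{i=1}^{t(k)}\bigl(\acc_{M_i}(x,k)-\rej_{M_i}(x,k)\bigr)=\prod_{i=1}^{t(k)}g_i(x,k).
\]

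Finally I would verify the resource bounds. In both constructions the running time is at most $\sum_i(\text{time of }M_i)+O(t(k){\sf poly}(n))=G(k){\sf poly}(n)$ for a computable $G$, and the number of nondeterministic moves is $\lceil\log t(k)\rceil+1+\sum_i(\text{nondet.\ moves of }M_i)$ in the sum case and $\sum_i(\text{nondet.\ moves of }M_i)$ in the product case, both of the form $H(k)\log n$ for a computable $H$; so $M$ is $k$-restricted and $\gap_M$ witnesses membership in $\Gap\fpt$. I expect the only genuinely delicate point to be the sign/parity identity for the product --- checking that sequential simulation faithfully realizes the cross product of the computation trees and that a single parity bit is enough extra state --- the rest being routine bookkeeping that the accumulated time and nondeterminism remain bounded by computable functions of $k$.
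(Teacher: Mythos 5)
Your proof is correct and follows essentially the same route as the paper: for sums, guess an index $i$ and simulate $M_i$ (the paper omits your padding device for $t(k)$ not a power of two, but any fix making spurious branches gap-neutral works), and for products, your sequential simulation with a parity bit that accepts iff an even number of phases reject is exactly the generalization of the paper's $t(k)=2$ construction, where $M$ flips $M_2$'s answer precisely when $M_1$ rejects. The sign-expansion identity you verify is the content the paper leaves implicit when it says the two-machine argument ``can be generalized.''
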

\begin{proof}
The arguments here are straightforward  adaptations of  proofs from classical complexity. We include it here for completeness. 
For summation, we can construct a new machine $M$ that first guesses  $i\in [1,t(k)]$ and and runs the $k$-restricted machine for $g_i$ on $(x,k)$. 

For product,  we will show for the case  when $t(k)=2$. Let $f_1,f_2\in \Gap\fpt$. Let  $M_1$ and $M_2$ as the $k$-restricted machines   such that
$f_i(x,k)= \acc_{M_i}(x,k)- \rej_{M_i}(x,k)$, $1\le i\le 2$. Let $\overline{M_i}$ be the machine that flips the answers of $M_i$. Let $M$ be $k$-restricted  machine defined as follows: On input $(x,k)$ first simulate $M_1$ on $(x,k)$. If $M_1$ accepts then run $M_2$ on $(x,k)$ and accept if and only if $M_2$ does so. If $M_1$ rejects then run  $\overline{M_2}$ on $(x,k)$ and accept if and only if $\overline{M_2}$ does so.  It can be seen that $f_1(x,k)f_2(x,k)= \acc_M(x,k)-\rej_M(x,k)$.

The above argument can be generalized to the case $t(k)\ge 2$. 
\end{proof}

  \begin{theorem}
  \label{thm:char-pp}
   Let $L$ be a parameterized language. The following are equivalent:
  \begin{enumerate}
  \item $L\in \ppp$.
  \item There is a  $k-$restricted Turing machine $M$ such that,

  $(x,k)\in L\iff \#{\sf accept}_M(x,k) -\#{\sf reject}_M(x,k)>0$ .
  \item There is a function $f\in \Gap\fpt$ such that $(x,k)\in L \iff f(x,k)>0$
\item There is a $B\in \fpt$, and $P(n,k)=f(k)\log n$ such that
$(x,k)\in L \iff |\{y\in \{0,1\}^{P(n,k)}~|~(x,y,k)\in B\}|\ge 2^{P(n,k)-1}+1$.
\end{enumerate}   
  \end{theorem}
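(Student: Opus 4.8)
The plan is to prove the cycle of implications $1 \Rightarrow 2 \Rightarrow 3 \Rightarrow 4 \Rightarrow 1$, mirroring the classical argument that $\mathsf{PP}$ admits all of these characterizations, while being careful that the $k$-restricted resource bound (at most $f(k)\log n$ non-deterministic/probabilistic moves) is preserved at every step.

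For $1 \Rightarrow 2$: if $L \in \ppp$ via a $k$-restricted probabilistic TM $M$ making $P(n,k) = f(k)\log n$ probabilistic moves, I would view each such machine as a non-deterministic machine where each probabilistic move is a non-deterministic branch. The subtlety is that $M$'s computation tree need not be balanced (different paths may use different numbers of random bits), so ``$\Pr[M \text{ accepts}] > 1/2$'' is not literally ``$\#\mathsf{accept} > \#\mathsf{reject}$''. The standard fix is to pad: on any path that halts after using $j < P(n,k)$ probabilistic moves, continue making $P(n,k) - j$ dummy moves so that every path has exactly $P(n,k)$ branches and the tree is a full binary tree of depth $P(n,k)$; then $\Pr[M \text{ accepts}] = \#\mathsf{accept}_M(x,k) / 2^{P(n,k)}$, and the threshold $> 1/2$ becomes exactly $\#\mathsf{accept} - \#\mathsf{reject} > 0$. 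Padding only adds $f(k)\log n$ moves and $O(f(k)\log n)$ time, so the machine remains $k$-restricted. For $2 \Rightarrow 3$: given such an $M$, the function $f(x,k) = \#\mathsf{accept}_M(x,k) - \#\mathsf{reject}_M(x,k)$ is in $\Gap\fpt$ by definition, and $(x,k) \in L \iff f(x,k) > 0$.

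For $3 \Rightarrow 4$: given $f \in \Gap\fpt$, Lemma~\ref{lem:diff-gap} gives $g, h \in \#\Wp$ with $f = g - h$; let $M_g, M_h$ be the corresponding $k$-restricted machines. After padding (as above) I may assume $M_g$ has exactly $2^{P_1(n,k)}$ paths and $M_h$ exactly $2^{P_2(n,k)}$ paths for $P_i(n,k) = f_i(k)\log n$; by further padding both to a common length $P(n,k) = \max\{P_1, P_2\}(n,k)$ (each original path spawns $2^{P - P_i}$ copies) I get $g(x,k) = \#\mathsf{accept}_{M_g}$ out of $2^{P(n,k)}$ and similarly for $h$. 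Now I want a single $\fpt$-relation $B$ on strings $y \in \{0,1\}^{P(n,k)}$ (possibly with one extra bit, i.e.\ $P(n,k)+1$) such that the number of good $y$ equals roughly $2^{P}\!/2 + (g-h)/2$, so that $f(x,k) > 0 \iff \#\{y\} \ge 2^{P}\!/2 + 1$. The classical construction: using one extra random bit $b$, if $b = 0$ simulate $M_g$ and accept iff it accepts; if $b = 1$ simulate $M_h$ and accept iff it \emph{rejects}; this gives $\#\mathsf{accept} = g(x,k) + (2^{P} - h(x,k))$ over $2^{P+1}$ strings, which exceeds $2^{P} = 2^{(P+1)-1}$ exactly when $g > h$, i.e.\ when $f > 0$. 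Deciding, given $(x,y,k)$ with $y = (b, y')$, whether this composite machine accepts is an $\fpt$ computation (simulate one of two $\fpt$-time machines along the path coded by $y'$), so $B \in \fpt$ with parameter length $P(n,k)+1$, matching the form in item~4.

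For $4 \Rightarrow 1$: given $B \in \fpt$ and $P(n,k) = f(k)\log n$, build a probabilistic $k$-restricted TM $M$ that draws $y \in \{0,1\}^{P(n,k)}$ uniformly using $P(n,k)$ probabilistic moves (each a word of $O(\log n)$ bits, which suffices since $P(n,k) = f(k)\log n$), runs the $\fpt$ decision procedure for $B$ on $(x,y,k)$, and accepts iff $(x,y,k) \in B$; then $\Pr[M \text{ accepts}] = \#\{y : (x,y,k) \in B\} / 2^{P(n,k)}$, which is $> 1/2$ iff the count is $\ge 2^{P(n,k)-1}+1$. The total time is $f(k)\mathsf{poly}(n)$, so $M$ is $k$-restricted and witnesses $L \in \ppp$. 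The main obstacle I anticipate is the bookkeeping in the padding/length-synchronization steps: one must check at each stage that ``pad to a common power of two number of paths'' costs only $O(f(k)\log n)$ extra moves and $f(k)\mathsf{poly}(n)$ extra time (it does, since $2^{P - P_i}$ copies are generated by $P - P_i = O(f(k)\log n)$ extra branches), and that the off-by-one in ``$> 1/2$'' versus ``$\ge 2^{P-1}+1$'' is handled consistently — using one auxiliary bit to make the total number of paths even is the cleanest way to keep the strict-inequality and the $\ge 2^{P-1}+1$ thresholds aligned.
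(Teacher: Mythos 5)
Your proof is correct and follows the same overall plan as the paper (the cycle $1\Rightarrow 2\Rightarrow 3\Rightarrow 4\Rightarrow 1$, with $2\Rightarrow 3$ immediate from the definition of $\Gap\fpt$ and $4\Rightarrow 1$ via the machine that guesses $y$ and checks $(x,y,k)\in B$), but it differs in one step and in level of care. For $3\Rightarrow 4$ the paper works directly with the $k$-restricted machine $M$ witnessing $f=\gap_M$: it takes $B$ to be the set of triples $\langle x,y,k\rangle$ such that $M$ accepts along the path $y$, and notes that $\gap_M(x,k)>0$ forces $\acc_M(x,k)>2^{P(n,k)-1}$, hence $\ge 2^{P(n,k)-1}+1$ by integrality. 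You instead invoke Lemma~\ref{lem:diff-gap} to write $f=g-h$ with $g,h\in\#\Wp$ and use the classical one-extra-bit construction (accept a path of $M_g$, or a rejecting path of $M_h$), counting $g+(2^{P}-h)$ out of $2^{P+1}$ strings; this is equally valid and matches the form in item~4 after absorbing the extra bit into $P(n,k)$, at the cost of being slightly longer. Your explicit padding of computation trees to full depth $P(n,k)$ addresses a genuine subtlety that the paper leaves implicit: both the step from $\Pr[M\text{ accepts}]>\tfrac12$ to $\acc_M-\rej_M>0$ in $1\Rightarrow 2$ (and its converse, which the paper does not write out) and the identity $\acc_M+\rej_M=2^{P(n,k)}$ used in $3\Rightarrow 4$ require all paths to use the same number of random/non-deterministic bits, and your check that padding costs only $O(f(k)\log n)$ extra moves keeps the machines $k$-restricted. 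So your write-up buys a more self-contained and rigorous treatment of the balancing issue, while the paper's choice of $B$ in $3\Rightarrow 4$ is the more economical route.
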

\begin{proof}[Theorem~\ref{thm:char-pp}]
  ($1\Rightarrow 2$)  Let $L\in \ppp$. Let $M$ be a $k$-restricted probabilistic  machine for $L$. Then,
  \begin{eqnarray*}
 (x, k) \in L &\Rightarrow& Pr[\text{M accept} (x, k)] >\frac{1}{2} \Rightarrow \frac{\#{\sf accept}_M(x,k)}{\#{\sf accept}_M(x,k)+\#{\sf reject}_M(x,k)}> \frac{1}{2}\\ 
     &\Rightarrow& \#{\sf accept}_M(x,k) - \#{\sf reject}_M(x,k) > 0 
 \end{eqnarray*}  

 $(2\Rightarrow 3)$    This directly follows from the definition of $\Gap\fpt$. 
 
 $(3\Rightarrow 4)$ Let $f\in \Gap\fpt$ with $(x,k)\in L\iff f(x,k)>0$, and $M$ be a $k$-restricted machine with $f(x,k)=\gap_M(x,k). $ Let $P(n,k)$ be the number of non-deterministic bits used by $M$ on an input of length $n$ with parameter $k$.  Then  $\gap_M(x,k)>0 \implies  \acc_M(x,k) > 2^{P(n,k)}/2 =2^{P(n,k)-1}$. Let
  $$B= \{\langle x, y, k \rangle\mid \mbox{ $M$ on the  non-deterministic path defined by $y$ accepts $x$.} \} $$
Clearly, $B\in \fpt$ and 
$$ \acc_M(x,k) = |\{ y\in \{0,1\}^{P(n,k)}~|~ \langle x,y,k \rangle\in B \}|. $$

Thus $(x,k)\in L \implies|\{ y\in \{0,1\}^{P(n,k)}~|~ \langle x,y,k \rangle\in B \}|>2^{P(n,k)-1}$. 

$(4 \Rightarrow 1)$  Let $L$ as given in 4. Let $M$ be $k$-restricted machine that on input $(x,k)$ guesses a string $y\in \{0,1\}^{P(n,k)}$ and  accepts if and only if $\langle x,y,k \rangle\in B$. Then we have $x\in L \iff \acc_M(x,k)> 2^{P(n,k)-1} \iff Pr[M~\mbox{accepts } (x,k)]>1/2$.

\end{proof}
 
Similar to the case of ${\sf PP}$, we observe that an FPT machine
with oracle access to a function in $\#W[P]$ is equivalent to an FPT machine with a language in $\ppp$ as an oracle.  
\begin{theorem}
\label{thm:oracle-wp}
 $\fpt^{\#\Wp}=\fpt^{\ppp}$
\end{theorem}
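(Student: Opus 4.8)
The plan is to mimic the classical proof that $\text{P}^{\#\text{P}} = \text{P}^{\text{PP}}$, adapting each direction to the $k$-restricted setting. The easy inclusion is $\fpt^{\ppp} \subseteq \fpt^{\#\Wp}$: given an FPT oracle machine with a $\ppp$ oracle $L$, by Theorem~\ref{thm:char-pp} there is a $k$-restricted TM $M_L$ with $(x,k) \in L \iff \gap_{M_L}(x,k) > 0$, equivalently $\acc_{M_L}(x,k) > 2^{P(n,k)-1}$. The function $g(x,k) = \acc_{M_L}(x,k)$ lies in $\#\Wp$, so the simulating FPT machine, whenever the original would query $L$ on some $(x',k')$, instead queries $g$, computes $P(|x'|,k')$ (which it can, since $f$ is computable), and compares. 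One subtlety: queries may be asked on instances whose parameter differs from the global $k$; one must ensure the parameter passed along the reduction stays bounded by a function of $k$, which is the standard convention for FPT oracle access in this literature, so I would invoke that convention and note the query parameters remain $k$-bounded.

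The harder inclusion is $\fpt^{\#\Wp} \subseteq \fpt^{\ppp}$. Here the classical trick is: to compute a $\#\text{P}$ value with a $\text{PP}$ oracle, use binary search, where each comparison ``is $g(x,k) \ge m$?'' is answered by the $\text{PP}$ oracle. So the key step is to show that for $g \in \#\Wp$ computed by a $k$-restricted machine $M$ with $P(n,k) = f(k)\log n$ nondeterministic bits, the language $C = \{\langle x, m, k\rangle \mid \acc_M(x,k) \ge m\}$ is in $\ppp$. To do this I would build a $k$-restricted probabilistic machine that, on input $\langle x,m,k\rangle$, branches on a random bit: on one branch it simulates $M$ on $(x,k)$ (accepting iff $M$ accepts), and on the other branch it produces exactly $2^{P(n,k)} - m$ accepting paths out of $2^{P(n,k)}$ total (this requires the threshold branch to have the same number of random bits $P(n,k)$, and to accept on a ``first $2^{P(n,k)}-m$'' set of strings — doable in FPT time since $m \le 2^{P(n,k)}$ and the branch just does an inequality test on the $P(n,k)$-bit guess, which is $f(k)\log n$ bits). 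Then the total acceptance probability exceeds $1/2$ iff $\acc_M(x,k) + (2^{P(n,k)} - m) > 2^{P(n,k)}$, i.e. iff $\acc_M(x,k) \ge m$, placing $C$ in $\ppp$ via the characterization in part 4 of Theorem~\ref{thm:char-pp}.

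With $C \in \ppp$ in hand, the simulating FPT machine replaces each oracle call $g(x',k')$ of the $\#\Wp$ oracle by a binary search over $m \in \{0, 1, \ldots, 2^{P(|x'|,k')}\}$, making $O(P(|x'|,k')) = O(f(k')\log|x'|)$ queries to the $C$-oracle per original query; since the original machine makes polynomially many oracle queries and $f(k')$ is bounded by a function of $k$, the total running time stays $h(k)\,\text{poly}(n)$ for a computable $h$. I would also need $g$ itself to be representable with at most $f(k)\log n$ bits of output — it is, since $\acc_M(x,k) \le 2^{P(n,k)} = n^{f(k)}$, so each query value fits and binary search terminates in $f(k)\log n$ rounds.

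The main obstacle I anticipate is bookkeeping around the parameter of oracle queries and the number of nondeterministic bits: I must make sure (i) the ``threshold'' branch in the construction of $C$ uses exactly the same number $P(n,k)$ of random bits as the simulation of $M$ so the two branches combine cleanly into a single probability, (ii) the quantity $m$ in an instance $\langle x,m,k\rangle$ of $C$ is presented so that FPT-time comparison against a $P(n,k)$-bit guess is possible, and (iii) the parameters flowing through the chain FPT machine $\to$ binary search $\to$ $C$-oracle $\to$ simulated $M$ all remain bounded by computable functions of the top-level parameter $k$. None of these is deep, but getting the constants and bit-lengths consistent is where the proof could go wrong, so I would state the FPT oracle-access convention explicitly at the outset and carry $P(n,k)$ through every construction.
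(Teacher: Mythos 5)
Your proposal follows essentially the same route as the paper: the easy direction simulates each $\ppp$ oracle query by a query to a $\#\Wp$ function plus a threshold comparison (the paper packs the two functions $g,h$ from the $\Gap\fpt$ characterization into one function and compares their difference, which is the same idea), and the hard direction builds the threshold language $\{\langle x,m,k\rangle \mid \acc_M(x,k)\ \text{exceeds}\ m\}$, shows it lies in $\ppp$, and recovers $g(x,k)$ by binary search over $O(f(k)\log n)$ bits, exactly as in the paper (which routes the membership proof through $\Gap\fpt$ rather than your direct two-branch machine, an inessential difference). One small slip: with a branch that accepts on exactly $2^{P(n,k)}-m$ strings, acceptance probability exceeds $1/2$ iff $\acc_M(x,k) > m$, not $\acc_M(x,k)\ge m$; either switch $C$ to the strict inequality (as the paper does with $\acc_M(x,k)>y$) or use $2^{P(n,k)}-m+1$ accepting strings, and the binary search goes through unchanged.
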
  
  \begin{proof}[Theorem~\ref{thm:oracle-wp}]
We show the containment in both the directions. 
We start with the easier direction, i.e., we show
$\fpt^{\ppp}\subseteq \fpt^{\#\Wp}$.

Let $L\in \fpt^{\ppp}$ and $M$ be a  deterministic oracle   Turing machine that runs in time $f(k){\sf poly}(n)$  and  $A\in \ppp$  be such that such that $L= L(M^{A})$. We need to show that $L\in FPT^{\#W[P]}$. By  Theorem~\ref{thm:char-pp}, there are two parameterized  functions $g,h:\{0,1\}^*\times k\to {\mathbb N}$ with $g,h\in \#\Wp$ such that 
\begin{eqnarray}
(x,k)\in A\iff g(x,k)- h(x,k)>0.
\label{eq:pp-gap}
\end{eqnarray}
 Let $\gamma:\{0,1\}^*\times k \to \mathbb{N} $  where
$ \gamma(0x,k)= g(x,k), \mbox{ and } \gamma(1x,k) = h(x,k),  \forall x\in \{0,1\}^*$.
On strings of length $0$ and  $1$, $\gamma$ can be defined arbitrarily.  We have $\gamma \in \#\Wp$,  since a  on input $y=ax$, with $a\in \{0,1\}$, the machine would run the machine for $g$ if $a=0$ and machine for $h$ if $a=1$.

We can simulate a query  $(y,k')$ made by the machine $M$ to $A$ by two queries to the function $\gamma$: (1) Query $(0y,k')$ and (2) $(1y,k')$, compute the difference of the values obtained and  use (\ref{eq:pp-gap}) to decide the membership of $(y,k')$ in $A$. Thus we can conclude $L\in\fpt^{\#\Wp}$.

For the reverse containment, 
  given a Turing machine $M$, let  $L_M$  be the language defined as : $ L_M= \{ ((x,k, y) \in \Sigma^*\times \mathbb{N}  \times \mathbb{N} ~|~ \acc_{M}(x,k) > y \} $
\begin{claim}
 Let $M$ be a $k$-restricted Turing machine, then $L_M\in \ppp$.
 \end{claim}  
\begin{proof}[Claim]
Let $M'$ be a Turing machine computing function $t(x,k,y)$, that on input $(x,k, y)$, produces exactly $y$ accepting paths, where $y$ is represented in binary, and $y\in[0, 2^{P(n,k)}]$.  Clearly, $M'$ is a $k$-restricted Turing machine, since
it needs to use only $P(n,k)$ many non-deterministic bits. Thus the function $t(x,k,y)=y$ is in $\#\Wp$. Let $f_M(x,k,y)=\acc_M(x,k)-y$. Then by Lemma~\ref{lem:diff-gap} $f_M$ is in $\gap\Wp$ and the claim now follows from Theorem~\ref{thm:char-pp}.
\qed\end{proof}
Let $L\in \fpt^{\#\Wp}$, then there is a deterministic oracle Turing machine $M'$ that runs $\fpt$ time, and a function $g\in \#\Wp$
such that  $L= M'^{g}$. Let $M$ be a $k$-restricted Turing machine that uses at most $f(k)\log n$ non-deterministic steps such that $g(x,k)=\acc_M(x,k)$.   We use the standard binary search technique to show that  $g(x,k)$ can be computed  using $O(kn)$ many queries to the language $L_M$.
 \begin{description}
 \item[Input] $(x,k)$, oracle access to $L_M$. {\bf Output} $g(x,k)$. 
 \item[1.]  Initialize $p= P(|x|,k)$, $y=2^p$.
\item[2.] Repeat steps 3 \& 4 until $p\ge 0$ 
 \item[3.] Query $(x,k,y)$ to the oracle;  If YES, then set $b_p=1$ and  $y=y+2^{p-1}$;  Else  
  set $b_p=0$.
  \item[4.] Set $p=p-1$
  \item[5.] Return $a={\sf binary}(b_pb_{p-1}\ldots b_0)$.
 \end{description}
 In the above ${\sf binary}(b_pb_{p-1}\ldots b_0)=\sum_{i=0}^p 2^ib_i$. Clearly, the algorithm  above runs in time $f(k){\sf poly}(n)$, and hence computing $g$ can be done in $\fpt$ with oracle access to $L_M\in \ppp$. This concludes  the inclusion in the converse direction.  
\qed \end{proof}
\begin{theorem}
\label{thm:pp-comp} 
$\ppp$ is closed under complementation.
\end{theorem}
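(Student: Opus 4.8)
The plan is to sidestep the naïve approach of simply interchanging the accepting and rejecting states of a $\ppp$-machine for $L$. Because $\ppp$ is defined with an asymmetric threshold (acceptance probability $>\tfrac12$ versus $\le\tfrac12$), the flipped machine fails precisely on those instances $(x,k)\notin L$ for which $\Pr[M\text{ accepts }(x,k)]=\tfrac12$ exactly: there the complement machine also accepts with probability exactly $\tfrac12$, hence it rejects, which is the wrong answer. The entire difficulty is this single boundary value, and the cleanest way to handle it is to pass to the $\Gap\fpt$-characterization of $\ppp$ furnished by Theorem~\ref{thm:char-pp}.

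First I would invoke the equivalence of items $1$ and $3$ of Theorem~\ref{thm:char-pp}: fix a function $f\in\Gap\fpt$ such that $(x,k)\in L \iff f(x,k)>0$ for every $(x,k)$. Since the complement of $L$ is exactly $\{(x,k) : f(x,k)\le 0\}$, it suffices to exhibit a function $g\in\Gap\fpt$ with $(x,k)\notin L \iff g(x,k)>0$. Here I would use that every function in $\Gap\fpt$ is $\mathbb{Z}$-valued, which gives the elementary equivalence $f(x,k)\le 0 \iff 1-f(x,k)>0$; so the candidate is $g := 1-f$, where the shift by $1$ (rather than plain negation) is exactly what absorbs the off-by-one at the threshold.

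Next I would verify $g\in\Gap\fpt$. The constant functions $1$ and $-1$ lie in $\Gap\fpt$ via trivial deterministic $k$-restricted machines that respectively accept outright (so $\gap = +1$) and reject outright (so $\gap = -1$). By Lemma~\ref{lem:gap-closure}, $\Gap\fpt$ is closed under $p$-bounded products, hence $-f = (-1)\cdot f \in \Gap\fpt$; and it is closed under $p$-bounded sums, hence $g = 1 + (-f) \in \Gap\fpt$. Applying Theorem~\ref{thm:char-pp} once more (this time item $3 \Rightarrow$ item $1$) to $g$ then yields $\overline{L}\in\ppp$, which completes the argument.

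I do not expect a genuine obstacle here; the only point requiring care is the boundary value, and that is precisely why the proof uses the $+1$ shift together with the integrality of $\Gap\fpt$-functions rather than a bare sign flip. For readers who prefer an explicit machine-level construction, the same shift can be carried out directly: given a $k$-restricted machine $M$ for $L$ using $P(n,k)$ nondeterministic bits, with $a := \acc_M(x,k)$ (so $(x,k)\in L \iff a \ge 2^{P(n,k)-1}+1$), build $M'$ using $P(n,k)+1$ bits that, on its first guessed bit $b$, either (if $b=0$) simulates $M$ and flips the verdict, contributing $2^{P(n,k)}-a$ accepting paths, or (if $b=1$) accepts on a fixed set of exactly $2^{P(n,k)-1}+1$ paths; then $\acc_{M'}(x,k) > 2^{P(n,k)} = 2^{(P(n,k)+1)-1}$ holds iff $a \le 2^{P(n,k)-1}$ iff $(x,k)\notin L$, and $M'$ is still $k$-restricted, so $\overline{L}\in\ppp$ by the definition of the class.
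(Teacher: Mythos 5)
Your proof is correct, but it follows a genuinely different route from the paper. The paper works directly at the machine/probability level: it first observes that on a $k$-restricted machine the acceptance probability is a multiple of $2^{-P(n,k)}$, so ``$>\tfrac12$'' means ``$\ge \tfrac12 + 2^{-P(n,k)}$'', then builds a modified machine $M'$ that damps the acceptance probability by a factor $\bigl(1-2^{-(P(n,k)+1)}\bigr)$ so that it is never exactly $\tfrac12$, and finally flips $M'$. You instead handle the same boundary issue algebraically: take $f\in\Gap\fpt$ with $(x,k)\in L \iff f(x,k)>0$ from Theorem~\ref{thm:char-pp}, use integrality to rewrite $f(x,k)\le 0$ as $1-f(x,k)>0$, and close the argument with Lemma~\ref{lem:gap-closure} (constants $\pm1$ are gaps of trivial deterministic machines, which are $k$-restricted) plus a second application of Theorem~\ref{thm:char-pp}; there is no circularity, since the characterization theorem does not rely on closure under complement. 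Your approach buys modularity and brevity --- the ``$+1$ shift'' absorbs the threshold asymmetry in one line --- whereas the paper's explicit probability-perturbation construction yields a machine with strict separation ($>\tfrac12$ versus $<\tfrac12$) on both sides, which the paper then reuses in the proof of Lemma~\ref{lem:pp-symdiff}. Your optional machine-level variant (averaging the flipped machine with a branch accepting exactly $2^{P(n,k)-1}+1$ of its $2^{P(n,k)}$ paths) is also sound and is yet another way to realize the same shift; the path count checks out, since $(2^{P(n,k)}-a)+(2^{P(n,k)-1}+1)>2^{P(n,k)} \iff a\le 2^{P(n,k)-1}$.
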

Proof can be found in the appendix.

\begin{lemma}
\label{lem:pp-symdiff}
$\ppp$ is closed under symmetric difference. 
\end{lemma}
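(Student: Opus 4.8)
The plan is to reduce to the gap-function characterization of $\ppp$ supplied by Theorem~\ref{thm:char-pp}. Given $L_1,L_2\in\ppp$ over the same alphabet, items (1) and (3) of Theorem~\ref{thm:char-pp} give functions $f_1,f_2\in\Gap\fpt$ with $(x,k)\in L_i\iff f_i(x,k)>0$ for $i=1,2$. I then want to exhibit a single $f\in\Gap\fpt$ with $(x,k)\in L_1\triangle L_2\iff f(x,k)>0$ and conclude $L_1\triangle L_2\in\ppp$ by applying Theorem~\ref{thm:char-pp} in the other direction.

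The first step is to replace each $f_i$ by a function whose \emph{sign} encodes membership and which never vanishes. Since $f_i$ is integer valued, set $g_i:=2f_i-1$; then $g_i$ is odd, hence $g_i(x,k)\ne 0$ always, while $g_i(x,k)>0\iff f_i(x,k)\ge 1\iff (x,k)\in L_i$ and $g_i(x,k)<0\iff (x,k)\notin L_i$. Each $g_i$ lies in $\Gap\fpt$ by Lemma~\ref{lem:gap-closure}: it is a bounded-length sum of two copies of $f_i$ together with the constant function $-1$, and every constant integer function is trivially in $\Gap\fpt$ (realize $-1$ by a $k$-restricted machine with a single rejecting path and no accepting path).

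The second step is the sign bookkeeping: $(x,k)$ lies in exactly one of $L_1,L_2$ iff $g_1(x,k)$ and $g_2(x,k)$ have opposite signs, i.e.\ iff $g_1(x,k)\cdot g_2(x,k)<0$; and since neither factor is ever $0$, the product is $>0$ exactly on the complement of $L_1\triangle L_2$. Hence $f:=-\,g_1\cdot g_2$ satisfies $(x,k)\in L_1\triangle L_2\iff f(x,k)>0$. By Lemma~\ref{lem:gap-closure}, $f\in\Gap\fpt$: it is a product of two $\Gap\fpt$ functions followed by negation, and negation preserves $\Gap\fpt$ (swap accepting and rejecting paths). Applying the (3)$\Rightarrow$(1) direction of Theorem~\ref{thm:char-pp} to $f$ then gives $L_1\triangle L_2\in\ppp$.

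I do not expect a real obstacle here; the only point needing care — and the reason for working with $g_i=2f_i-1$ rather than $f_i$ itself — is ensuring the shifted functions are strictly nonzero, so that the sign of the product $g_1g_2$ cleanly separates $L_1\triangle L_2$ from its complement with no boundary case at $0$. Everything else is routine closure bookkeeping handed to us by Lemmas~\ref{lem:diff-gap} and~\ref{lem:gap-closure}. (Alternatively one could derive symmetric-difference closure from Theorem~\ref{thm:pp-comp} together with closure under union, but the direct $\Gap\fpt$ argument is shorter and avoids having to establish union closure.)
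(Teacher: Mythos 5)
Your proof is correct, and it reaches the lemma by a different route than the paper, even though the algebraic heart is the same. You stay entirely at the level of gap functions: characterization (3) of Theorem~\ref{thm:char-pp} supplies $f_1,f_2\in\Gap\fpt$, the shift $g_i=2f_i-1$ makes each gap odd and hence never zero, and Lemma~\ref{lem:gap-closure} (together with the easy observations that the constant $-1$ and negations of $\Gap\fpt$ functions are again in $\Gap\fpt$) places $-g_1g_2$ in $\Gap\fpt$; the direction (3)$\Rightarrow$(1) of Theorem~\ref{thm:char-pp} then finishes. The paper instead argues through characterization (4): it takes $\fpt$ predicates $B_1,B_2$ with acceptance thresholds, first replaces them by $B_1',B_2'$ whose counts avoid the boundary value $2^{P'(n,k)-1}$ by reusing the construction from the proof of Theorem~\ref{thm:pp-comp} (this step plays exactly the role of your shift $2f_i-1$), and then builds an explicit $k$-restricted machine that guesses a pair $(y_1,y_2)$ and accepts iff exactly one of the answers ``$(x,k,y_i)\in B_i'$'' is YES; the number of accepting paths exceeds the relevant threshold precisely when the shifted counts $a_1,a_2$ have opposite signs, which is the same sign-of-a-product criterion you encode as $-g_1g_2>0$. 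Your version buys brevity and avoids explicit path counting by delegating the bookkeeping to Lemmas~\ref{lem:diff-gap} and~\ref{lem:gap-closure}; the paper's version is self-contained at the machine level and directly mirrors the classical unbounded-error argument of~\cite{AW90}. Your closing remark is also well taken: deriving the result from complementation plus union closure would require a union-closure result that the paper does not establish, so the direct product argument is the right choice.
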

\begin{proof}
The proof  essentially follows the ideas in the classical setting~\cite{AW90}. 
Let $L_1,L_2 \in \ppp$. By Theorem~\ref{thm:char-pp}, there are  languages $B_1, B_2 \in \fpt$, and a function  $P(n,k)=f(k)\log n$ such that for any $x\in \{0,1\}^*$,  $k\in \mathbb{N}$ and $i\in\{1,2\}$,
\[(x,k) \in L_i \iff |\{ y_i\in \{0,1\}^{P(n,k)}|(x,k,y_i)\in B_i \}| \geq 2^{P(n,k)-1}+1 \]
Using a construction similar to the one used in the proof of Theorem~\ref{thm:pp-comp}, we get parameterized languages $B_{1}'$ and $B_{2}'$, and a function $P'(n,k)=f'(k)\log n$  with the following property for $1\le i\le 2$:
\begin{eqnarray*}
(x,k) \in L_i &\implies& |\{ y_i\in \{0,1\}^{P'(n,k)}|(x,k,y_i)\in B_{i}' \}| \ge 2^{P'(n,k)-1} + 1; \mbox{and} \\
(x,k) \notin L_i &\implies& |\{ y_i\in \{0,1\}^{P'(n,k)}|(x,k,y_i)\in B_{i}' \}| \le 2^{P'(n,k)-1} - 1.
\end{eqnarray*}
Let $a_1(x), a_2(x)\in\mathbb{Z}$ such that
$|\{y \in\{0,1\}^{P(n,k)}~|~ \langle x,y\rangle \in B_{i}'\}|=2^{P(n,k)-1}+a_i(x)$ for $1\le i\le 2$. Thus $|\{y\in \{0,1\}^{P(n,k)}~|~ \langle x,y\rangle \notin B_{i}'\}|=2^{P(n,k)-1}-a_i(x)$.  
For $x\in\Sigma^*$, let 
\begin{eqnarray*}
\ell(x,k) & &\stackrel{\triangle}{=}|S(x,k)|\\ & &= |(2^{P'(n,k)-1}+a_1)(2^{P'(n,k)-1}-a_2)+(2^{P'(n,k)-1}-a_1)(2^{P'(n,k)-1}+a_2)|\\ 
 & &=(2^{2P'(n,k)-1}-a_1a_2) ,
\end{eqnarray*}
where $S(x,k)= \{\langle y_1,y_2\rangle \mid (<x,y_1>\in B_1 \wedge \langle x,y_2\rangle \notin B_2)\vee (\langle x,y_1\rangle\notin B_1 \wedge \langle x,y_2\rangle\in B_2 ) \}$.
 Now, if $x\in L_1 \bigtriangleup L_2$ then either $(a_1\geq 1 \mbox{ and } a_2 < 0)$ or $(a_1 < 0 \mbox{ and } a_2 \geq 1)$ then $\ell >2^{2P(n,k)-1}$ and 
 if $x\notin L_1 \bigtriangleup L_2$ then either both $a_1$ and $ a_2 $ are greater than equal to 1 or both are less than 1, and in both the cases $\ell\leq 2^{2P(n,k)-1} $. Let $M'$ be a  $k$-restricted Turing machine that on input
 $(x,k)$ guesses two  strings $y_1$ and $y_2$ of length $P'(n,k)$ each, and queries $(x,k,y_i)$ to $B_i'$, $1\le i\le 2$, accepts 
 if and only if exactly one of the oracle answers is YES. It can be seen that $\acc_{M'}(x,k)=\ell(x,k)$. 
We conclude $L_1 \bigtriangleup L_2 \in \ppp$. 
 \end{proof}

\section{ Polynomial Identity Testing}
M\"uller~\cite{Mul08} studied the Arithmetic Circuit Identity Testing (ACIT)  problem with various parameters and obtained  upper bounds  as well as 
hardness results for each of the parameters considered.  However none of the parameters considered in~\cite{Mul08} seem 
adequate for developing a complexity theory for the parameterized probabilistic and counting classes along the lines of classical complexity classes. 

Recall that, in ACIT we are given an arithmetic circuit $C$ as an input and the task is to test if the polynomial computed by $C$ is identically zero. We consider the degree of the polynomial computed by $C$ as a parameter.
\begin{problem}[$\pacit$]
{\em Input}: Arithmetic circuit $C$, $\sdeg(C)\le k$.\\
{\em Parameter}: $k$. \\
{\em Task}: Test if the polynomial computed by $C$ is identically zero. 
\end{problem}
Our main objective now  is to show that $\pacit\in \rfpt$. However, it should be noted that this does not follow directly from the Schwartz-Zippel Lemma,  since it would require $O(n\log k)$ random bits. So the challenge here is to reduce the number of random bits required to $f(k)\log n$. Towards this, we use a mapping defined by Shpilka and Volkovich~\cite{SV09} that reduces the number of variables from $n$ to $2k$. Then we apply Alon's Combinatorial Nullstellensatz~\cite{Alo99} to obtain  what can be treated as a parameterized version of the Schwartz-Zippel lemma.

We begin with a few observations on polynomials of degree at most $k$. Let $S$ be any finite subset of $\mathbb{K}$ that includes $0\in \mathbb{K}$ and let $W^{k}_{n}(S)$ denote the set of all vectors in $S^n$ with at most $k$ non zero entries i.e, the set of all vectors of Hamming weight at most $k$. 
\begin{lemma}
\label{lem:weightk}
Let $f$ be an $n$-variate polynomial of degree at most $k$. Then 
$$f\equiv 0 \iff \forall a\in W^{k}_n(S)~~f(a)=0,$$
where $S\subset \mathbb{K}$ has at least $k+1$ elements. 
\end{lemma}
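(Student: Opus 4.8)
The plan is to reduce the claim to the Combinatorial Nullstellensatz (Proposition~\ref{pro:combnull}) by exploiting the fact that a degree-$k$ polynomial cannot "see" more than $k$ variables in any single monomial. The forward direction is trivial, so the work is entirely in the converse: assuming $f(a)=0$ for every $a\in W^k_n(S)$, I want to conclude $f\equiv 0$.

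First I would observe that every monomial appearing in $f$ involves at most $k$ distinct variables, since $\deg f\le k$. For each subset $T\subseteq[n]$ with $|T|\le k$, let $f_T$ be the polynomial obtained from $f$ by keeping exactly those monomials whose support (set of variables occurring with positive exponent) is contained in $T$; more usefully, let $g_T(x_i : i\in T)$ be the polynomial obtained by setting $x_j=0$ for all $j\notin T$. Then $g_T$ is a polynomial in at most $k$ variables, each of individual degree at most $k$, and for any assignment $b\in S^T$ the vector $a\in S^n$ that agrees with $b$ on $T$ and is $0$ elsewhere lies in $W^k_n(S)$; hence $g_T(b)=f(a)=0$. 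Since $|S|\ge k+1$, Proposition~\ref{pro:combnull} applied to $g_T$ over the grid $S^T$ forces $g_T\equiv 0$.

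Next I would assemble these vanishing restrictions into the vanishing of $f$ itself. The key combinatorial point is that every monomial $m$ of $f$ has support $T_m$ with $|T_m|\le k$, and $m$ survives (with its original coefficient) in the restriction $g_{T_m}$ obtained by zeroing the variables outside $T_m$ — indeed $m$ appears in $g_{T}$ for $T=T_m$ precisely because no variable of $m$ is set to zero, while any monomial of $f$ with a strictly larger or incomparable support is killed. More carefully, the coefficient of a monomial $m$ in $f$ equals the coefficient of $m$ in $g_{T_m}$, because passing to $g_{T_m}$ only deletes monomials that use some variable outside $T_m$, and such monomials are distinct from $m$. Since each $g_T\equiv 0$, all these coefficients are $0$, so $f\equiv 0$.

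The main obstacle is the bookkeeping in this last step: one must be careful that distinct monomials of $f$ are genuinely recovered from distinct (or at least well-chosen) restrictions, and that no monomial's coefficient is accidentally conflated with another's when we zero out variables. This is handled cleanly by the observation that each monomial is recovered from the restriction indexed by \emph{its own} support, together with the fact that zeroing variables can only delete monomials, never merge them. An alternative, perhaps slicker, route is to induct on $n$: if $n\le k$ the statement is just Proposition~\ref{pro:combnull}, and otherwise one writes $f=\sum_{j\ge 0} x_n^{j} f_j(x_1,\dots,x_{n-1})$ with $\deg f_j\le k-j$, uses the hypothesis with $x_n=0$ to kill $f_0$ by induction, then divides out and recurses — but the restriction-based argument above is more transparent and avoids tracking degree drops.
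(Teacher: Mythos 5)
Your proof is correct, but it takes a genuinely different route from the paper's. You argue directly at the level of monomial supports: every monomial of a degree-$k$ polynomial has support of size at most $k$, the restriction $g_T$ (setting the variables outside a set $T$, $|T|\le k$, to zero) vanishes on the grid $S^{T}$ because the corresponding points padded with zeros lie in $W^k_n(S)$ (this is where $0\in S$, stated in the paper's setup, is used), so the Combinatorial Nullstellensatz in at most $k$ variables kills each $g_T$, and each coefficient of $f$ is recovered intact from the restriction indexed by its own support. The paper instead inducts on $n$: the base case $n\le k$ is the Nullstellensatz applied to $f$ itself, and for $n>k$ it applies the induction hypothesis to each $f|_{x_i=0}$ to conclude $x_i\mid f$ for every $i$, whence $x_1x_2\cdots x_n\mid f$ and $\deg f\ge n>k$, a contradiction unless $f\equiv 0$. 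Your restriction argument is non-inductive, invokes the Nullstellensatz only on $\le k$-variable grids, and avoids the divisibility-and-degree step entirely, which arguably makes the bookkeeping more transparent; the paper's induction is shorter to state but leans on the algebraic fact that simultaneous vanishing of all the $f|_{x_i=0}$ forces divisibility by $\prod_i x_i$. (The inductive alternative you sketch at the end, expanding in powers of $x_n$, is closer in spirit to the paper's proof but is not needed given your main argument.)
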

\begin{proof}
For simplicity, we denote $W^{k}_n(S)$ by $W^{k}_n$. The proof is by induction on $n$. For the base case, suppose $n\le k$.  Since individual degrees of each variable is bounded by $k$, by Proposition~\ref{pro:combnull}, we have $f\equiv 0 \iff f(a)=0~\forall~a\in S^n$, for an $S$ with $|S|\ge k$.

For the  induction step, let $n > k$, and $f(a)=0~\forall~ a\in W^{k}_n$.
For $i\in\{1,\ldots, n\}$, let  $f_i{=} f|_{x_i=0}$, i.e., $f$ substituted with $x_i=0$.
Note that  each of the $f_i$ is a degree $k$ polynomial on at most $n-1$ variables, and $\forall a\in W^{k}_{n-1}~~f_i(a)=0$. By the induction hypothesis, we have $f_i\equiv 0$, and hence $x_i$ divides $ f$. Repeating the argument for all $i\in [1,\ldots, n]$, we have $x_1x_2\cdots x_n$ divides $f$, and hence ${\sf deg}(f)\ge n > k$, a contradiction since  ${\sf deg}(f)=k<n$. Thus we conclude $\forall a\in W^{k}_n ~f(a)=0\implies f\equiv 0$.  The converse direction  is  trivially true. 
\end{proof} 
 
We need a function introduced by Shpilka and Volkovich~\cite{SV09}, that gives a map $G_k: \mathbb{K}[x_1,\ldots, x_n]\to \mathbb{K}[y_1,\ldots, y_{2k}]$ and  serves as a non-identity preserving for a large class of polynomials.  We observe that $G_k$ also functions as a non-identity preserving map for the class of all $n$ variate polynomials of degree at most $k$. We begin with the definition of the generator $G_k$.
\begin{definition}[Shpilka-Volkovich Hitting set generator,\cite{SV09}]
Let $a_1,\ldots, a_n$ be distinct elements in $\mathbb{K}$. Let $G^{i}_k \in \mathbb{K}[y_1,\ldots, y_k,  z_1,\ldots, z_k]$ be the polynomial defined as follows:
\begin{eqnarray*}
G^{i}_k(y_1,\ldots,y_k, z_1,\ldots, z_k)= \sum_{j=1}^k L_i(y_i)z_i, ~\mbox{ where }~ L_i(x)= \frac{\prod_{j\neq i}(x-a_j)}{\prod_{j\neq i}(a_i-a_j)}.
\end{eqnarray*}
The generator $G_k$ is defined  as $G_k \stackrel{\triangle}{=} (G^{1}_k,\ldots, G^{n}_k)$. 
\end{definition}
\begin{lemma}
\label{lem:image}
For any  finite set $S\subset \mathbb{K}$,
then $
W^{k}_n(S) \subseteq \{(G^{1}_k(a), \ldots, G^{n}_k(a))~ |~ a\in (S\cup\{a_1,\ldots, a_n\})^{2k}\}.$
  
\end{lemma}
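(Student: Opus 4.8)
The plan is to exhibit, for each $w\in W^{k}_n(S)$, an explicit evaluation point $a\in(S\cup\{a_1,\ldots,a_n\})^{2k}$ with $(G^{1}_k(a),\ldots,G^{n}_k(a))=w$. The whole argument rests on one fact about the polynomials $L_i$: they form the Lagrange basis for the nodes $a_1,\ldots,a_n$, so $L_i(a_\ell)=1$ when $\ell=i$ and $L_i(a_\ell)=0$ otherwise (in the latter case the numerator $\prod_{j\neq i}(a_\ell-a_j)$ contains the vanishing factor $a_\ell-a_\ell$). Reading the definition of $G^{i}_k$ as $G^{i}_k(y_1,\ldots,y_k,z_1,\ldots,z_k)=\sum_{j=1}^k L_i(y_j)z_j$, a substitution $y_j=a_{t_j}$ therefore makes $G^{i}_k$ collapse to $\sum_{j:\,t_j=i}z_j$.

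First I would fix $w=(w_1,\ldots,w_n)\in W^{k}_n(S)$ and let $i_1<\cdots<i_m$ list the coordinates on which $w$ is nonzero, so $m\le k$ and every $w_{i_r}\in S$. I then define $a=(y_1,\ldots,y_k,z_1,\ldots,z_k)$ by $y_r=a_{i_r}$ and $z_r=w_{i_r}$ for $1\le r\le m$, and $y_r=a_1$, $z_r=0$ for $m<r\le k$, where the standing assumption $0\in S$ is what lets us use $0$ as a $z$-coordinate. By construction each $y_r$ lies in $\{a_1,\ldots,a_n\}$ and each $z_r$ lies in $S$, so $a\in(S\cup\{a_1,\ldots,a_n\})^{2k}$.

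Next I would evaluate $G^{i}_k$ at $a$ for an arbitrary $i\in[n]$: the dummy terms each contribute $L_i(a_1)\cdot 0=0$, so $G^{i}_k(a)=\sum_{r=1}^m L_i(a_{i_r})\,w_{i_r}$. Since the $i_r$ are pairwise distinct and $L_i(a_{i_r})$ equals $1$ exactly when $i=i_r$, this sum equals $w_i$ when $i$ is in the support of $w$ and equals $0=w_i$ when it is not. Hence $(G^{1}_k(a),\ldots,G^{n}_k(a))=w$, and since $w$ was an arbitrary element of $W^{k}_n(S)$ the claimed inclusion follows.

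I do not expect a genuine obstacle here. The only points that need care are the bookkeeping for the ``dummy'' coordinates $(y_r,z_r)$ with $m<r\le k$ — they must stay inside $S\cup\{a_1,\ldots,a_n\}$ yet contribute nothing to any $G^{i}_k$, which is the single place the hypothesis $0\in S$ is invoked — and interpreting the displayed definition of $G^{i}_k$ as $\sum_{j=1}^k L_i(y_j)z_j$ with the summation index $j$ genuinely varying inside the summand.
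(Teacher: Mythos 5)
Your proof is correct and follows essentially the same route as the paper's own (sketched) argument: exploit the Lagrange property $L_i(a_j)=\delta_{ij}$, set the $y$-coordinates to the nodes indexing the support of $w$, the $z$-coordinates to the corresponding nonzero values, and pad the unused pairs with $z$-value $0$. You merely spell out the bookkeeping (including the $0\in S$ assumption and the intended reading of the sum as $\sum_j L_i(y_j)z_j$) that the paper leaves implicit.
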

\begin{proof}
The proof essentially follows the arguments in~\cite{SV09}. We include a sketch here for the sake of completeness.
Note that,
\[L_i(\alpha)= \begin{cases} 0~~\alpha=a_j, \mbox{ if } j\neq i\\ 1 \mbox{ if } \alpha=a_i.\end{cases}
       \] 
       Thus if we set $y_\ell= a_i$, then the image of $G^{i}_k$ contains $z_i$ as a summand. By ensuring that $y_j, i\neq j$ gets some $a_\ell, i\neq \ell$, we get $G^{i}_k=z_i$. In this way we can obtain all vectors of Hamming weight $k$, by setting $y_i$'s and $z_i$'s accordingly. \qed
\end{proof}
Combining Lemma~\ref{lem:image} with Lemma~\ref{lem:weightk} we have:
\begin{lemma}
\label{lem:genk}
Let $f$ be a polynomial of degree at most $k$. Then $ f\equiv 0 \iff f(G_k)\equiv 0$.
\end{lemma}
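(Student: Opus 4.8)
The plan is to chain together the two preceding lemmas, treating $f(G_k)$ as the polynomial $f(G_k^1(a),\ldots,G_k^n(a))$ in the $2k$ variables $y_1,\ldots,y_k,z_1,\ldots,z_k$. The forward direction ($f\equiv 0 \Rightarrow f(G_k)\equiv 0$) is immediate: substituting polynomials into a polynomial that is identically zero yields the zero polynomial, regardless of degree considerations, so nothing needs to be said there beyond a sentence.

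For the nontrivial direction, suppose $f(G_k)\equiv 0$ as a polynomial in $\mathbb{K}[y_1,\ldots,y_k,z_1,\ldots,z_k]$. First I would fix a finite set $S\subset\mathbb{K}$ with $|S|\ge k+1$ and containing $0$, as required by Lemma~\ref{lem:weightk}, and enlarge it to $S' = S\cup\{a_1,\ldots,a_n\}$, where $a_1,\ldots,a_n$ are the distinct field elements used in the definition of the generator. Since $f(G_k)$ is identically zero, it vanishes at every point of $(S')^{2k}$; in particular, for every $a\in(S')^{2k}$ we have $f\big(G_k^1(a),\ldots,G_k^n(a)\big)=0$. By Lemma~\ref{lem:image}, every vector in $W^k_n(S)$ arises as such an evaluation point $\big(G_k^1(a),\ldots,G_k^n(a)\big)$ for a suitable $a\in(S')^{2k}$; hence $f$ vanishes on all of $W^k_n(S)$. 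Now invoke Lemma~\ref{lem:weightk}: since $f$ has degree at most $k$ and $|S|\ge k+1$, vanishing on $W^k_n(S)$ forces $f\equiv 0$.

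The only point requiring a little care — and what I expect to be the main (though minor) obstacle — is making sure the hypotheses of Lemma~\ref{lem:image} and Lemma~\ref{lem:weightk} are simultaneously satisfiable, i.e.\ that one can choose a single set $S$ that has at least $k+1$ elements, contains $0$, and for which Lemma~\ref{lem:image} gives the desired inclusion into evaluations over $(S\cup\{a_1,\ldots,a_n\})^{2k}$. Since Lemma~\ref{lem:image} holds for \emph{any} finite $S$, and we are free to pick $S$ large enough and to include $0$, there is no real tension; one just has to state the choice explicitly. I would also remark in passing that the argument does not depend on $\mathbb{K}$ being infinite beyond the need for $k+1$ distinct elements plus the $n$ distinct $a_i$'s, which is the standard mild field-size assumption already implicit in the Shpilka--Volkovich construction. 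With $S$ fixed this way, the proof is just the two-line composition of the inclusions above.
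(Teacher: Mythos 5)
Your proposal is correct and follows exactly the route the paper intends: the paper states Lemma~\ref{lem:genk} simply as the combination of Lemma~\ref{lem:image} and Lemma~\ref{lem:weightk}, which is precisely your argument (trivial forward direction, and for the converse, vanishing of $f(G_k)$ on $(S\cup\{a_1,\ldots,a_n\})^{2k}$ forces $f$ to vanish on $W^k_n(S)$, hence $f\equiv 0$). Your explicit care about choosing a single $S$ with $|S|\ge k+1$ containing $0$ is a reasonable filling-in of details the paper leaves implicit.
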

%

\begin{theorem}
\label{thm:pacit}
$\pacit$  is in $\rfpt$
\end{theorem}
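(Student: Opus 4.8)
The plan is to run the standard ``evaluate at a random point'' identity test, but to apply it to the $2k$-variate polynomial $q := p_C(G_k)$ rather than to $p_C$ itself, so that only $O(f(k)\log n)$ random bits are spent; Lemma~\ref{lem:genk} is what lets us read off the answer for $p_C$ from the behaviour of $q$. Concretely, given an input circuit $C$ on at most $n$ variables with $\sdeg(C)\le k$, note first that $\deg(p_C)\le k$. Choose the generator's interpolation points to be $a_j=j$ and set $S=\{1,2,\dots,3kn\}$; then $|S|\ge k+1$ and every $a_j$ lies in $S$, so Lemmas~\ref{lem:weightk} and~\ref{lem:image} apply and Lemma~\ref{lem:genk} gives $p_C\equiv 0 \iff q\equiv 0$, where $q\in\mathbb{K}[y_1,\dots,y_k,z_1,\dots,z_k]$. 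Since each $L_i$ has degree less than $n$, each $G^{i}_k$ has degree at most $n$, so $q$ has total degree at most $kn$ in its $2k$ variables.

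The algorithm is then the obvious one: sample $b\in S^{2k}$ uniformly at random, compute the values $G^{i}_k(b)$ for all $i$, evaluate $C$ at this point to obtain $q(b)$, and report ``$p_C\equiv 0$'' exactly when $q(b)=0$. If $p_C\equiv 0$ then $q\equiv 0$ and $q(b)=0$ with certainty; if $p_C\not\equiv 0$ then $q\not\equiv 0$, and by the Schwartz-Zippel lemma~\cite{Sch80,Zip79}, $\Pr_b[q(b)=0]\le \deg(q)/|S|\le kn/(3kn)=1/3$. Thus the test has one-sided error of the kind required by $\rfpt$: it is never wrong on the instances where it reports that $p_C$ is not identically zero. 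Drawing $b$ uses $2k\lceil\log|S|\rceil\le 2k(\log(3k)+\log n+1)$ random bits, which is bounded by $h(k)\log n$ for a suitable $h(k)=O(k\log k)$ because $\log n\ge 1$; in the model of a $k$-restricted probabilistic machine this amounts to at most $h(k)$ probabilistic moves of $O(\log n)$ bits each.

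The step I expect to demand the most care is not the probabilistic estimate above but verifying that this evaluation really fits within the time bound $g(k)\,\mathrm{poly}(n)$ of a $k$-restricted machine. Each $G^{i}_k(b)$ is a rational whose numerator and denominator have bit-length $\mathrm{poly}(n)$, and evaluating a circuit of size at most $n$ and syntactic degree at most $k$ at such a point produces a rational of bit-length $O(k)\cdot\mathrm{poly}(n)$; hence the $\mathrm{poly}(n)$ arithmetic operations involved are on numbers of bit-length $g(k)\,\mathrm{poly}(n)$ and the whole computation runs in $g(k)\,\mathrm{poly}(n)$ steps. An alternative that keeps all arithmetic over small numbers is to perform the entire evaluation in $\mathbb{F}_p$ for a prime $p$ drawn uniformly from a range of $O(\log k+\log n)$ bits, spending $O(\log k+\log n)$ extra random bits and a constant enlargement of $S$, chosen so that a nonzero value of $q(b)$ survives the reduction modulo $p$ with probability at least $5/6$; a constant number of independent repetitions then drives the overall error below $1/3$. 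Either way we conclude $\pacit\in\rfpt$.
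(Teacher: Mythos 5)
Your proof is correct and follows essentially the same route as the paper's: reduce to the $2k$-variate polynomial $p_C(G_k)$ of degree $O(nk)$ via Lemma~\ref{lem:genk}, then apply the Schwartz--Zippel lemma over a set of size $O(nk)$, using $O(k\log(nk))\le h(k)\log n$ random bits. The additional details you supply (explicit choice of $S$, the one-sided error direction, and the bit-complexity of evaluating the circuit on a $k$-restricted machine) are consistent with, and somewhat more careful than, the paper's two-line argument.
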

\begin{proof}
By Lemma~\ref{lem:genk} $\pacit$ reduces to testing identity of $2k$-variate polynomials
of degree $O(nk)$ (since the polynomials $L_i$ have degree  $n$). Now applying the Schwartz-Zippel lemma~\cite{Sch80,Zip79}, we obtain a randomized algorithm that uses $O(2k \log( nk))$ random bits and runs in time polynomial in $n$ and $k$.
\end{proof}

\section{Parameterized Permanent vs  Determinant} 

 The determinant ({\sf det}) permanent ({\sf perm}) functions  are   defined  as 
 \begin{eqnarray}
 {\sf det} (A) &=&\sum_{\sigma \in S_n}\prod_{i=1}^{n}{\sf sgn}(\sigma)a_{i,\sigma(i)}
  \label{eqn:det}
   \\ {\sf perm}(A) &=& \sum_{\sigma \in S_n}\prod_{i=1}^{n}a_{i,\sigma(i)},
\label{eq:perm} 
 \end{eqnarray}
 where $A=(a_{i,j})_{1\le i,j\le n}\in {\mathbb N}^{n\times n}$, $S_n$ is the set of all permutations on $n$ symbols and ${\sf sgn}$  is the sign function for permutations. It is known that, given an integer matrix $A$, computing ${\sf A}$ can be done in polynomial time (e.g., Gaussian Elimination method).  In his celebrated paper, 
  Valiant~\cite{Val79} showed that computing ${\sf perm}$ of an integer even for 0 or 1 matrix is complete for $\ShP$. 
Though there are several natural counting problems that characterize $\#W[1]$, it is desirable to have a parameterized variant of permanent so that we get access to the algebraic properties of the permanent function. 

 Naturally, we expect any parameterized variant of permanent to be a 
 function of degree $k$ in $n^2$ variables, where $k$ is the parameter.   
  One way to achieve this would be to restrict the summation given in (\ref{eq:perm})  that move exactly $k$-elements. Formally, a permutation $\sigma\in S_n$ is said to be a {\em $k$-permutation}, if $|\{i~|~ \sigma(i)\neq i\}| =k$.  Let $S_{n,k}$ denote the set of all $k$-permutations on $n$ symbols. 
 
 \begin{definition}  Let $k$ be a parameter. 
 The parameterized determinant ($\pdet$)  and permanent ($\pperm$) functions of a matrix $A\in {\mathbb Z}^{n\times n}$ are defined as follows:
 \begin{eqnarray*}
  \pdet(A,k)&=&\sum_{\sigma\in S_{n,k}}  \prod_{i\neq \sigma(i)} {\sf sgn(\sigma)}a_{i \sigma(i)}
 \\
 \pperm(A,k)&=&\sum_{\sigma\in S_{n,k}}  \prod_{i\neq \sigma(i)} a_{i \sigma(i)},
 \end{eqnarray*}

where $k$ is a parameter.
By abusing the notation, we also let $\pperm$ denote the problem of computing $\pperm$ of an $n\times n$ matrix, where $k$ is the parameter.
\end{definition}
Quite expectedly, $\pdet$ is FPT and  $\pperm$ can be shown to be $\#W[1]$ complete under fpt-reductions. We start with the tractability of $\pdet$.
\begin{theorem}
\label{thm:pdet}
 ${\pdet}$ on integer matrices  is fixed parameter tractable.
\end{theorem}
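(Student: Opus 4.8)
The plan is to show that $\pdet(A,k)$ can be extracted as a coefficient from a determinant of a polynomially-perturbed matrix, and then use the polynomial-time computability of the determinant together with polynomial interpolation. First I would introduce a fresh indeterminate $t$ and form the matrix $B(t) = (b_{ij})$ where $b_{ii} = 1$ (the diagonal is untouched by $k$-permutations anyway, since those positions contribute nothing to the product over $i \neq \sigma(i)$) and $b_{ij} = t\cdot a_{ij}$ for $i \neq j$. Then by the Leibniz expansion \eqref{eqn:det},
\[
{\sf det}(B(t)) = \sum_{\sigma \in S_n} {\sf sgn}(\sigma)\prod_{i=1}^n b_{i\sigma(i)} = \sum_{m=0}^{n} t^{m} \Big( \sum_{\sigma \in S_{n,m}} {\sf sgn}(\sigma) \prod_{i \neq \sigma(i)} a_{i\sigma(i)}\Big),
\]
so the coefficient of $t^k$ in ${\sf det}(B(t))$ is exactly $\pdet(A,k)$. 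This is the conceptual heart of the argument; everything else is bookkeeping.

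Next I would turn this into an algorithm. For each choice of $t$ from a set of $n+1$ distinct integer values $t_0,\dots,t_n$, the matrix $B(t_j)$ is an integer matrix, so ${\sf det}(B(t_j))$ can be computed in time polynomial in $n$ and in the bit-length of the entries (e.g.\ by Gaussian elimination over $\mathbb{Q}$, or a division-free method to control intermediate blow-up). Having these $n+1$ evaluations of the degree-$n$ univariate polynomial ${\sf det}(B(t))$, I recover all its coefficients by Lagrange interpolation in time polynomial in $n$, and read off the coefficient of $t^k$. Since the whole procedure runs in time ${\sf poly}(n)$ — independent of $k$ except that we only need the single coefficient indexed by $k$ — this is certainly $f(k){\sf poly}(n)$ time, so $\pdet$ is fixed parameter tractable; in fact it is polynomial-time computable.

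The main obstacle, and the only place that needs care, is the growth of intermediate numbers: the entries of $B(t_j)$ are integers of size $\max_j |t_j|$ times the entries of $A$, and naive Gaussian elimination can produce exponentially large rational intermediates. I would address this either by invoking a standard division-free or fraction-free (Bareiss-style) determinant algorithm, whose intermediate values are themselves minors and hence polynomially bounded in bit-length, or by working modulo sufficiently many small primes and applying CRT. Either way the bit-complexity stays polynomial in $n$ and $\log\|A\|$, so the $\fpt$ bound holds. (One should also note $\pdet(A,k)=0$ whenever $k=1$ or $k > n$, trivial edge cases that the interpolation handles automatically.)
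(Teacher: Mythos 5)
Your proposal is correct and follows essentially the same route as the paper: write $\pdet(A,k)$ as the coefficient of $t^k$ in a univariate polynomial given by the determinant of a modified matrix, evaluate that determinant at $n+1$ integer points in polynomial time, and recover the coefficient by interpolation. In fact your matrix $B(t)$ (ones on the diagonal, $t\,a_{ij}$ off the diagonal, i.e.\ $I+tA'$) is the correct form of the construction --- the paper as written takes $\det(xA')$ with a zeroed diagonal, which collapses to $x^n\det(A')$, so your version repairs that slip, and your attention to intermediate bit-growth (Bareiss or CRT) is extra care the paper leaves implicit.
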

\begin{proof}

Let $A\in \mathbb{Z}^{n\times n}$, and $k$ be the parameter. Let $A'$ be the matrix obtained from $A$ by replacing the diagonal entries in $A$ by zeroes. Clearly $\pdet(A,k)= \pdet(A',k)$. Let $x$ be a formal variable.  Then ${\sf det}(xA')$ is a univariate polynomial of degree bounded by $n$, and the coefficient of $x^k$ in ${\sf det}(xA')$ is equal to $\pdet(A,k)$.  The value $\pdet(A,k)$ be recovered using the standard interpolation of univariate polynomials.   
\qed   \end{proof}

\begin{theorem}
\label{thm:pperm}
$\pperm$ on matrices in $\mathbb{N}^{n\times n}$ is $\#W[1]$ complete. The hardness holds even in the case of 0-1 matrices. 
\end{theorem}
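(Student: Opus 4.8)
The statement has two halves: $\pperm\in\#\W[1]$, and $\#\W[1]$-hardness already for $0/1$ matrices. For both it helps to first note that the diagonal of $A$ is irrelevant to $\pperm(A,k)$ (a $k$-permutation never touches it), so we may assume $A$ has zero diagonal; once it does, a permutation $\sigma$ contributes to $\pperm(A,k)$ exactly when its support $T=\{i:\sigma(i)\neq i\}$ has size $k$, and then it contributes the corresponding monomial of the permanent of the principal submatrix $A[T,T]$ (fixed-point permutations of $T$ contribute nothing, hitting a zero on the diagonal). Hence
\[
 \pperm(A,k) \;=\; \sum_{T\in\binom{[n]}{k}} \perm\big(A[T,T]\big).
\]
Membership is then routine: I would build a \emph{tail} non-deterministic RAM $M$ whose last $O(k^{2})$ steps guess a $k$-subset $T=\{i_{1}<\dots<i_{k}\}$ of $[n]$ and a tuple $(j_{1},\dots,j_{k})\in T^{k}$, verify in $O(k^{2})$ steps that $i_{\ell}\mapsto j_{\ell}$ is a fixed-point-free bijection of $T$ (so each $k$-permutation with support $T$ is enumerated once), and — to account for the weights — guess a nonnegative integer $r$ and accept iff $r<\prod_{\ell}A[i_{\ell},j_{\ell}]$, the product being formed by $O(k)$ RAM operations. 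Then $\acc_{M}(A,k)=\sum_{(T,\sigma)}\prod_{i\in T}A[i,\sigma(i)]=\pperm(A,k)$, and all non-deterministic moves lie in the final $O(k^{2})$ steps, so $\pperm\in\#\W[1]$ (the ``guess $r$'' phase is vacuous when $A$ is $0/1$).

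For hardness I would reduce from counting matchings of size $k$ in a bipartite graph, which is $\#\W[1]$-hard~\cite{BC12,Cur13}. Given bipartite $G$ with sides $U,V$ padded with isolated vertices to common size $m$, let $B\in\{0,1\}^{m\times m}$ be its biadjacency matrix and $J$ the $m\times m$ all-ones matrix, and output the $0/1$ matrix
\[
 A \;=\; \begin{pmatrix} 0 & B \\ J & 0 \end{pmatrix}\in\{0,1\}^{2m\times 2m}
\]
with parameter $2k$. This $A$ has zero diagonal, so $\pperm(A,2k)=\sum_{T}\perm(A[T,T])$ over $2k$-subsets $T$ of the $2m$ indices. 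Split $T=T_{1}\sqcup T_{2}$ across the two blocks; the block-anti-diagonal shape forces $|T_{1}|=|T_{2}|=k$ for a nonzero term, and then a contributing permutation decomposes as an arbitrary pair of bijections $\phi:T_{1}\to T_{2}$ and $\psi:T_{2}\to T_{1}$ with monomial $\prod_{i\in T_{1}}B[i,\phi(i)]$ (all $J$-entries equal $1$). Summing out the free map $\psi$ (a factor $k!$) and then over $T_{1},T_{2},\phi$,
\[
 \pperm(A,2k) \;=\; k!\cdot\!\!\!\sum_{T_{1},T_{2}\in\binom{[m]}{k}}\!\!\!\perm\big(B[T_{1},T_{2}]\big) \;=\; k!\cdot\big(\text{number of $k$-matchings of }G\big),
\]
because $\perm(B[T_{1},T_{2}])$ counts the perfect matchings between the $U$-vertices indexed by $T_{1}$ and the $V$-vertices indexed by $T_{2}$, and each $k$-matching of $G$ is counted exactly once in this double sum. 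Since $k!$ depends only on the parameter, a single $\pperm$-query on the $0/1$ matrix $A$ recovers the number of $k$-matchings of $G$; this is an fpt-reduction, so $\pperm$ is $\#\W[1]$-hard for $0/1$ matrices, and together with membership it is $\#\W[1]$-complete.

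The crux is the hardness reduction, and specifically the choice of the $J$-block. The natural first guess — take $A$ to be the adjacency matrix of a graph $H$ and hope $\pperm(A,2k)$ counts $k$-matchings — fails, because $\perm(A[T,T])$ counts \emph{cycle covers} of $H[T]$ and hence also collects contributions from covers that use cycles of length $\ge 3$, and these ``long-cycle'' terms cannot be stripped off by any easy computation. The block matrix $\left(\begin{smallmatrix}0&B\\ J&0\end{smallmatrix}\right)$ is engineered so that every contributing permutation must send the $U$-part bijectively onto the $V$-part (via $B$, i.e.\ along a partial matching of $G$) and then back again through the all-ones block, where the return map is entirely unconstrained; all of the cycle-structure freedom of the permanent is thereby absorbed into the single multiplicative constant $k!$. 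Two minor loose ends remain to be checked: one needs $\#\W[1]$-hardness of $k$-matching for \emph{bipartite} graphs (this is what the construction of~\cite{Cur13} provides), and the membership argument as stated is cleanest when the entries of $A$ are polynomially bounded — which is exactly the case for the $0/1$ instances used in the hardness proof.
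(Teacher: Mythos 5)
Your proposal is correct, but its hardness direction takes a genuinely different route from the paper's. The paper also works via the $k$-matching counting problem in bipartite graphs (citing its $\#\W[1]$-completeness), but keeps the parameter equal to $k$: it relabels every edge $(i,i)$ of $G$ as $(i,n+i)$, lets $A'$ be the biadjacency matrix of the resulting graph $G'$, and asserts that the contributing $k$-permutations of $[2n]$ correspond exactly to the $k$-matchings of $G'$, so that $\pperm(A',k)$ equals the number of $k$-matchings. You instead form $A=\left(\begin{smallmatrix}0&B\\ J&0\end{smallmatrix}\right)$ with parameter $2k$ and divide out a $k!$ factor. Your extra machinery addresses a real issue, not an imaginary one: a $k$-permutation contributes only when its support is mapped onto itself, i.e.\ when the associated matching has identical left and right endpoint sets, whereas a general $k$-matching of $G'$ (e.g.\ $\{(1,2),(3,4)\}$ for $k=2$) corresponds to no $k$-permutation at all; so the paper's claimed identity $\pperm(A',k)=\#(k\text{-matchings of }G')$ does not hold verbatim, while your identity $\pperm(A,2k)=k!\cdot\#(k\text{-matchings of }G)$ checks out, at the harmless cost of a one-query fpt Turing reduction (the matching problem is hard under such reductions). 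Membership also differs in style: the paper reduces to weighted $k$-matching counting, whereas you argue directly with a tail-nondeterministic $k$-restricted machine via $\pperm(A,k)=\sum_{|T|=k}\perm(A[T,T])$ for zero-diagonal $A$, which is the counting analogue of the machine characterization the paper uses for $\W[1]$. The entry-size caveat you flag is real but not specific to your proof: for binary-encoded superpolynomial entries $\pperm(A,k)$ can exceed $n^{f(k)}$ and hence cannot be the accepting-path count of any $k$-restricted machine, so membership over all of $\mathbb{N}^{n\times n}$ implicitly presumes polynomially bounded entries; for the $0/1$ matrices on which completeness is asserted, your argument is complete and correct.
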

\begin{proof} It is  known that  counting $k$-matchings in a bipartite graph is complete for  $\#\w{1}$ even in the weighted case~\cite{CM14}.   We prove a parameter preserving equivalence between $\pperm$ and the problem of counting $k$-matchings in a bipartite graph  which  completes the proof.
 For the upper bound, we give a reduction from $\pperm$ to counting $k$-matchings in a bipartite graph. For a given matrix $A\in \mathbb{N}^{n\times  n}$  define the matrix $A'$ by setting the diagonal entries of $A$ to zero, i.e., $A'[i,j]= A[i,j]$ if $i\neq j$ and $A'[i,i]=0,$ $ 1\le i\le n$. Note that $\pperm(A)= \pperm(A')$.  Every $k$-permutation of $\{1,\ldots, n\}$ corresponds to a matching of size $k$ in the bipartite graph $G'$ with $A'$ is the bipartite adjacency matrix. Thus $\pperm(A')=$the sum  of weights of $k$-matchings in  $G'$.

For the hardness we give a reduction in the reverse direction, i.e.,  a parameter preserving reduction from counting the number of $k$- matchings in a Bipartite graph $G=(U,V,E)$ to computing $\pperm$ of an integer matrix. 

 Let $G=(U, V, E)$ be a  bipartite graph. Without loss of generality, assume that $U=V=\{1,\ldots, n\}$.   Construct a new bipartite graph   $G'= (U',V', E')$ with $U'=V'=\{1,\ldots, 2n\}$.  For every edge of the form $(i,j)\in E, i\neq j$, $G'$ contains the edge $(i,j)\in E'$. For edges of the form $(i,i)\in E$, $G'$ contains the edge $(i,n+i)\in E'$.  Note that the vertices $n+1,\ldots, 2n$ in $U'$  are isolated vertices.

 Note  that the  number of matchings in $G$ and  those in $G'$  of a given size $k$ are equal.
Let $A'$ be the bipartite adjacency matrix of $G'$.
 Every $k$-permutation of $[2n]$  that contributes  a non-zero value to $\pperm(A')$  corresponds to a matching of size $k$ in $G'$. Moreover, none of the $k$-matchings in $G'$ will have an edge of the form $(i,i), i\in [2n]$. Thus, $\pperm(A',k)= \#$matchings of size $k$ in $G'$.  This completes the proof.
\qed   \end{proof}
\section*{Conclusions}
We have studied  parameterized variants of probabilistic computation.
We hope that our definition of $\ppp$ leads to further developments in the structural aspects of probabilistic and counting complexities in the parameterized world. Further, $\ppp$ might be useful in  defining a   parameterized variant of the Counting Hierarchy (CH) which could in turn have implications to parameterized complexity of numerical and algebraic computation~\cite{ABKM09}.  Though definition of a parameterized CH  based on $\ppp$ is straightforward, the  usefulness of such a definition would rely on $\ppp$ being closed under intersection, which is not known currently. 

Further, we believe any  fixed parameter tractable randomized algorithm should naturally place the problem in $\Wp$. One way to achieve this is to   obtain  randomized FPT algorithms  that use at most $O(f(k)\log n)$ random bits. As a first step towards this direction, we introduce  a natural parameterization  to the polynomial identity testing for which we obtain such an algorithm.   We hope our observations will lead to further development of randomness efficient parameterized algorithms.

\section*{Acknowledgements}
We thank anonymous reviewers for their comments on an earlier version of this paper which helped in improving the presentation of the article.
\bibliographystyle{abbrv}
\bibliography{reports}

\begin{thebibliography}{10}

\bibitem{ABKM09}
E.~Allender, P.~B{\"u}rgisser, J.~Kjeldgaard-Pedersen, and P.~B. Miltersen.
\newblock On the complexity of numerical analysis.
\newblock {\em SIAM J. Comput.}, 38(5):1987--2006, 2009.

\bibitem{AW90}
E.~Allender and K.~W. Wagner.
\newblock Counting hierarchies: Polynomial time and constant.
\newblock {\em Bulletin of the EATCS}, 40:182--194, 1990.

\bibitem{Alo99}
N.~Alon.
\newblock Combinatorial nullstellensatz.
\newblock {\em Combinatorics, Problem and Computing}, 8, 1999.

\bibitem{AB}
S.~Arora and B.~Barak.
\newblock {\em Computational Complexity: A Modern approach}.
\newblock Cambridge Univeristy Press, 2009.

\bibitem{BC12}
M.~Bl{\"a}ser and R.~Curticapean.
\newblock Weighted counting of k-matchings is \#{W[1]-}hard.
\newblock In {\em IPEC}, pages 171--181, 2012.

\bibitem{CFG05}
Y.~Chen, J.~Flum, and M.~Grohe.
\newblock Machine-based methods in parameterized complexity theory.
\newblock {\em Theor. Comput. Sci.}, 339(2-3):167--199, 2005.

\bibitem{Cur13}
R.~Curticapean.
\newblock Counting matchings of size k is w[1]-hard.
\newblock In {\em ICALP (1)}, pages 352--363, 2013.

\bibitem{CM14}
R.~Curticapean and D.~Marx.
\newblock Complexity of counting subgraphs: only the boundedness of the
  vertex-cover number counts.
\newblock {\em CoRR}, abs/1407.2929, 2014.
\newblock To Appear in FOCS 2014.

\bibitem{DF92}
R.~G. Downey and M.~R. Fellows.
\newblock Fixed-parameter intractability.
\newblock In {\em Structure in Complexity Theory Conference}, pages 36--49,
  1992.

\bibitem{DF}
R.~G. Downey and M.~R. Fellows.
\newblock {\em Parameterized Complexity}.
\newblock Springer-Verlag, 1997.

\bibitem{DFR98}
R.~G. Downey, M.~R. Fellows, and K.~W. Regan.
\newblock Parameterized circuit complexity and the {W} hierarchy.
\newblock {\em Theor. Comput. Sci.}, 191(1-2):97--115, 1998.

\bibitem{DK}
D.-Z. Du and K.-I. Ko.
\newblock {\em Theory of Computational Complexity}.
\newblock Springer Verlag, 2000.

\bibitem{FG}
J.~Flum and M.~Grohe.
\newblock {\em Parameterized Complexity Theory}.
\newblock Springer-Verlag, 2008.

\bibitem{For97}
L.~Fortnow.
\newblock Counting complexity.
\newblock {\em In L. Hemaspaandra and A. Selman, editors, Complexity Theory
  Retrospective II}, pages 81--107, 1997.

\bibitem{KSS14}
N.~Kayal, C.~Saha, and R.~Saptharishi.
\newblock A super-polynomial lower bound for regular arithmetic formulas.
\newblock In {\em STOC}, pages 146--153, 2014.

\bibitem{Kon09}
J.~Kontinen.
\newblock A logical characterization of the counting hierarchy.
\newblock {\em ACM Trans. Comput. Log.}, 10(1), 2009.

\bibitem{Mul08b}
M.~M{\"{u}}ller.
\newblock Parameterized derandomization.
\newblock In {\em IWPEC}, pages 148--159, 2008.

\bibitem{Mul08}
M.~{M}\"uller.
\newblock {\em Parameterized Randomization}.
\newblock PhD thesis, Albert-Ludwigs-Universit\"at Freiburg im Breisgau, 2008.

\bibitem{Nie06}
R.~Niedermeier.
\newblock {\em Invitation to Fixed-Parameter Algorithms}, volume~31 of {\em
  Oxford Lecture Series in Mathematics and Its Applications}.
\newblock Oxford University Press, 2006.

\bibitem{Sch80}
J.~T. Schwartz.
\newblock Fast probabilistic algorithms for verification of polynomial
  identities.
\newblock {\em J. ACM}, 27(4):701--717, 1980.

\bibitem{SV09}
A.~Shpilka and I.~Volkovich.
\newblock Improved polynomial identity testing of read-once formulas.
\newblock In {\em Approximation, Randomization and Combinatorial Optimization.
  Algorithms and Techniques, volume 5687 of LNCS}, pages 700--713, 2009.

\bibitem{Val79}
L.~G. Valiant.
\newblock The complexity of computing the permanent.
\newblock {\em Theor. Comput. Sci.}, 8:189--201, 1979.

\bibitem{Zip79}
R.~Zippel.
\newblock Probabilistic algorithms for sparse polynomials.
\newblock In {\em EUROSAM}, pages 216--226, 1979.

\end{thebibliography}
\newpage
\appendix
\section{Proof of Theorem~\ref{thm:pp-comp}}
Let L $\in \ppp$ then there exist a $k$-restricted   Turing machine $M$ running  using at most $P(n,k)$ random bits such that
\begin{eqnarray*}
(x,k) \in L &\Rightarrow& Pr_{y\in \{0,1\}^{f(k)\log n}}[M\mbox{ accepts }(x,k)]\geq \frac{1}{2}+\frac{1}{2^{P(n,k)}}; \mbox{ and }\\
(x,k) \notin L &\Rightarrow& Pr_{y\in \{0,1\}^{f(k)\log n}}[M\mbox{ accepts }(x,k)]\leq \frac{1}{2}
\end{eqnarray*}
  Let  $M'$ be the machine that on input $(x,k)$ simulates $M$, rejects $(x,k)$ if $M$ does so, and whenever $M$ accepts,  chooses a random string of length $P(n,k)+1$ and  rejects only if the random string is $1^{P(n,k)+1}$ and accepts otherwise. For any $(x,k)\in \Sigma^*\times k$, we have:   
\begin{eqnarray*}
(x,k) \in L &\Rightarrow& Pr_{y\in \{0,1\}^{P(n,k)}}[M'\mbox{ accepts }(x,k)]\geq (\frac{1}{2}+\frac{1}{2^{P(n,k)}})(1-\frac{1}{2^{p(n,k)+1}})\\ & & ~~~~~~~~~~~~~~~~~~~~~~~~~~~~~~~~~~~~~~~~~~~~~~~~~~~~~~~~
>\frac{1}{2}; \mbox{ and } \\
 (x,k) \notin L &\Rightarrow& Pr_{y\in \{0,1\}^{f(k)\log n}}[M' \mbox{ accepts } (x,k)]\leq \frac{1}{2}(1-\frac{1}{2^{p(n,k)+1}})<\frac{1}{2} 
\end{eqnarray*}
Now, let $M^c$ be the machine that flips the answers of $M'$, i.e., $M^c$ accepts whenever $M'$ rejects and vice versa. We have :
\begin{eqnarray}
 (x,k) \notin \overline{L}& \Rightarrow& Pr[ M^c\mbox { accepts } (x,y,k)]=Pr[ M^c \mbox{ rejects } (x,y,k)] < \frac{1}{2} \\ 
(x,k) \in \overline{L} &\Rightarrow& Pr[M^c \mbox{ accepts  }(x,y,k)]=Pr[M \mbox{ rejects }(x,y,k)]  >  \frac{1}{2} 
\end{eqnarray}
  Note that $M'$ is $k$ restricted.   This completes the proof.\qed
\end{document}